\documentclass[envcountsame]{llncs}
\pdfoutput=1
\usepackage{microtype}
\usepackage{amssymb}
\usepackage{amsmath}
\frenchspacing

\usepackage{enumerate}
\usepackage{tikz}
\usetikzlibrary{shapes,backgrounds,plothandlers,plotmarks,calc,arrows,fadings}

\usepackage{etoolbox}
\makeatletter
\let\llncs@addcontentsline\addcontentsline
\patchcmd{\maketitle}{\addcontentsline}{\llncs@addcontentsline}{}{}
\patchcmd{\maketitle}{\addcontentsline}{\llncs@addcontentsline}{}{}
\patchcmd{\maketitle}{\addcontentsline}{\llncs@addcontentsline}{}{}
\setcounter{tocdepth}{2}
\makeatother
\usepackage{hyperref}
\hypersetup{
    colorlinks,
    linkcolor={red!50!black},
    citecolor={blue!50!black},
    urlcolor={blue!80!black},
    pdftitle={Reconfiguration in bounded bandwidth and treedepth},
    pdfauthor={Marcin Wrochna},
    pdfkeywords={reconfiguration, recoloring, treewidth, bandwidth, treedepth},
}
\usepackage[numbered]{bookmark}

\renewcommand{\qed}{\hfill$\square$\medskip}

\def\thuerule{\leftrightarrow}
\def\reach{\leftrightarrow^*}
\def\tapeL{\mbox{\textdollar}}
\def\tapeR{\mbox{\rlap c{\footnotesize/}}}
\def\tapeE{\mbox{\textvisiblespace}\ }
\def\Oh{\mathcal{O}}

\def\problem#1{\textsc{#1}}
\def\cclass#1{{#1}}

\begin{document}

\title{Reconfiguration in bounded bandwidth and treedepth\texorpdfstring{\thanks{This work is supported by the Foundation for Polish Science (HOMING PLUS/2011-4/8)}}{}}
\date{\small \today}
\author{Marcin Wrochna}
\institute{Uniwersytet Warszawski, Institute of Computer Science, Warsaw, Poland.\\
Email: {\tt mw290715@students.mimuw.edu.pl}}

\maketitle

\begin{abstract}
	We show that several reconfiguration problems known to be \cclass{PSPACE}-complete remain so even when limited to graphs of bounded bandwidth. The essential step is noticing the similarity to very limited string rewriting systems, whose ability to directly simulate Turing Machines is classically known. This resolves a question posed open in [Bonsma P., 2012]. On the other hand, we show that a large class of reconfiguration problems becomes tractable on graphs of bounded treedepth, and that this result is in some sense tight.
\end{abstract}

\section{Introduction}
In the reconfiguration framework one studies how combinatorial objects can be transformed into one another by sequences of small transformations. Usually the set of objects considered is the solution space of a known combinatorial problem and the transformations allowed are changes to a single element of the solution. 
For example, Bonsma~et~al.~\cite{bonsma2009recoloring} studied the problem \problem{$k$-Coloring Reachability}, defined as follows: given two proper $k$-colorings of a graph, can one be transformed into another by changing one color at a time (maintaining a proper coloring throughout).
Another well studied example is independent set reconfiguration \cite{hearn2005ncl,ito2008complexity,kaminski2011shortest}, where given a set of tokens placed on vertices of a graph, one asks whether it is possible to reach another configuration of the tokens by moving one at a time, so that no two tokens are ever adjacent.

Several results suggested that if the graph underlying the combinatorial problem is assumed to have some structure that allows one to find solutions in polynomial time, then questions about reconfiguring solutions can also be answered in polynomial time. For \problem{Independent Set}, deciding the existence of any solution of some size is a classic \cclass{NP}-complete problem. It remains \cclass{NP}-complete even when limited to cubic planar graphs \cite{mohar2001IS_cubic_planar}, but can be solved in polynomial time for bipartite graphs, for claw-free graphs \cite{sbihi1980IS_clawfree,minty1980IS_clawfree,faenza2011IS_clawfree}, and for graphs of bounded treewidth, among others (see \cite{isgci}). In the reconfiguration variants, the reachability problem is known to be \cclass{PSPACE}-complete, even for subcubic planar graphs \cite{hearn2005ncl}. Recently it has been shown to be solvable in polynomial time for claw-free graphs \cite{bonsma2014ISR_clawfree} and cographs~\cite{bonsma2014ISR_cograph}.

Simple algorithms for \problem{Independent Set}, \problem{$k$-Coloring} and many other problems are known for graphs of bounded treewidth (see \cite{bodlaender1994tw_tourist} for an overview and definitions). This motivated the question of determining the complexity of reconfiguration problems in graphs of bounded treewidth, posed open by Bonsma~\cite{bonsma_rerouting_2012}. He further motivated the question by showing that the techniques used for such classes -- dynamic programming -- apply to reconfiguration, by using them to show polynomial algorithms for \problem{Shortest Path Reachability} in planar graphs and for \problem{TAR Reachability} in cographs (i.e., $P_4$-free graphs) \cite{bonsma2014ISR_cograph}. We answer it in the negative, showing that several such problems are \cclass{PSPACE}-complete even when limited to graphs of bounded bandwidth, a notion strictly stronger than treewidth or pathwidth.

To rigorously explore possible patterns in the complexity of reconfiguration problems, we introduce reconfiguration of homomorphisms, or $H$-colorings for digraphs $H$. On one hand, this naturally generalizes \problem{$k$-Coloring Reachability}. On the other hand, this extends the work of Gopalan et~al.~\cite{gopalan2009connectivity} on reconfiguration of generalized satisfiability problems to constraint satisfaction problems (by allowing variables to take more than two different values, but restricting our attention to a single binary relation). Because $H$-colorings provide a special --- though fully expressive --- case of constraint satisfaction problems, we believe them to be the right setting for formally describing patterns arising in reconfiguration problems.

\subsubsection*{Results}
We show that there exist integers $k,b$ such that reachability in reconfiguration variants of \problem{$k$-Coloring}, \problem{Independent Set} and \problem{Shortest Path} is \cclass{PSPACE}-complete even when limited to graphs of bandwidth $b$. As intermediary steps that highlight where the hardness comes from, we show that reconfiguring $k$-list-colorings is \cclass{PSPACE}-complete even for very specific graphs of pathwidth 2, and that there is a digraph $H$ such that reconfiguring $H$-colorings is \cclass{PSPACE}-complete even for paths.

Finally, we give an algorithm for $H$-coloring reconfiguration in graphs of bounded tree-depth. This being very restrictive, the algorithm is not very surprising nor practical, but by connecting the fact with the PSPACE-hardness reductions we show the following: for a class of graphs $\mathcal{C}$ closed under subgraphs, \problem{$H$-Coloring Reachability} problems have polynomial algorithms for all digraphs $H$ if and only if $\mathcal{C}$ has bounded treedepth.
\medskip

Definitions of the problems and graph parameters are, because of their number, only recalled in the section concerning them. For others we refer to the book of Diestel \cite{diestel2000graph}. To omit some technical details, we allow graphs and digraphs to have loops, unless stated otherwise. We don't allow multiple edges, but digraphs can have edges in both directions between two vertices.

\section{String rewriting systems}
The general idea in our reductions is to construct an arbitrarily complicated set of local rules with a fixed instance of the problem -- connecting such instances in a path then allows to simulate the tape of a Turing Machine in a graph of bounded bandwidth. To formalize this into clearly delineated parts we give reductions from the word problem in string rewriting systems (also known as semi-Thue systems and essentially equivalent to unrestricted grammars and finitely presented monoids), whose ability to directly simulate Turing Machines is a well-known, classical result. We construct a very limited \cclass{PSPACE}-complete string rewriting system and later interpret it as an intermediary reconfiguration problem, from which reductions to other problems are easy.

A \emph{string rewriting system} (SRS for short) is a pair $(\Sigma,R)$ where $\Sigma$ is a finite alphabet and $R$ is a set of rules, where each rule is an ordered pair of words $(\alpha,\beta)\in \Sigma^* \times \Sigma^*$. A rule can be applied to a word by replacing one subword by the other, that is, for two words $s,t\in\Sigma^*$, we write $s\rightarrow_R t$ if there is a rule $(\alpha,\beta)\in R$ and words $u,v\in\Sigma^*$ such that $s=u\alpha v$ and $t=u\beta v$. The reflexive transitive closure of this relation defines a reachability relation $\rightarrow^*_R$, where a word $t$ can be reached from another $s$ iff it can be obtained from $s$ by repeated application of rules from $R$. The \emph{word problem} of $R$ is the problem of deciding, given two word $s,t\in\Sigma^*$, whether $s\rightarrow^*_R t$.

A string rewriting system is called \emph{symmetric} when $(\alpha,\beta)\in R \iff (\beta,\alpha)\in R$, in other words, rules are unordered pairs and the reachability relation is symmetric (this is also known as a Thue system). A SRS is called \emph{balanced} if for each rule $(\alpha,\beta)\in R$ we have $|\alpha|=|\beta|$, and $2$-\emph{balanced} if for each rule $(\alpha,\beta)\in R$, $|\alpha|=|\beta|=2$. In a balanced system, only words of the same length can be equivalent. 

The word problem of certain 2-balanced symmetric SRSs is known to be \cclass{PSPACE}-complete. This fact is a folklore variant of the undecidability of general SRSs, whose proof by Emil Post \cite{post1947thue} (and independently by A. A. Markov \cite{markov1947thue}) was described as ``the first unsolvability proof for a problem from classical mathematics''. The essential steps are: encoding the configurations of a Turing machine as a string so that a transition corresponds to string rewriting, padding the encoding so that the corresponding system is balanced, and noticing that the non-reversibility of a TM transition (the asymmetry of the corresponding rewriting system) is not essential in deterministic TMs (see \cite{lewis1980symmetric} for more on symmetric computation). 
An explicit proof of the fact for a balanced symmetric SRS can be found in \cite{bauer_thue_1984} and can easily be adapted to give a 2-balanced symmetric SRS. We include a self-contained proof here for completeness.

\begin{theorem}\label{thm:thue}
There is a 2-balanced symmetric string rewriting system whose word problem is \cclass{PSPACE}-complete (under $\leq^m_{log}$-reducibility).
\end{theorem}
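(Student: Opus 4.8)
The plan is to simulate a deterministic polynomial-space Turing machine by a 2-balanced symmetric SRS, following the classical Post/Markov route but with the extra care needed to keep every rule of length exactly $2$ and to keep the system symmetric. I would start from a fixed \cclass{PSPACE}-complete language decided by a deterministic single-tape TM $M$ that on inputs of length $n$ uses exactly $p(n)$ tape cells for some fixed polynomial $p$; since the word problem will receive both endpoints $s,t$ of a reconfiguration question, I can hard-code the space bound into the two words presented to the word problem and thereby keep the tape length fixed along the whole derivation. The encoding is the standard one: a configuration is a string over $\Sigma$ consisting of the current tape contents, with the single symbol under the head replaced by a composite symbol that records the pair (tape symbol, machine state). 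A single transition of $M$ then rewrites a short window around the head, and reachability of the accepting configuration from the start configuration corresponds exactly to $s\reach t$.

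\textbf{The three technical reductions} I would carry out in order are: first, \emph{balancing} -- I replace each rule so that $|\alpha|=|\beta|$, padding with a blank/neutral symbol so that a right-moving step never shortens or lengthens the string (this is why fixing the tape length up front is convenient; I never need to grow the tape); second, \emph{symmetrization} -- because $M$ is deterministic, each configuration has at most one successor, so adding the reversed rule $(\beta,\alpha)$ for every forward rule $(\alpha,\beta)$ does not create spurious reachability between distinct valid computations, and I would make this precise by arguing that in the symmetric closure any derivation from the start word can be rearranged into a monotone forward run followed possibly by a partial backward run, so $s\reach t$ holds iff $M$ accepts; third, \emph{2-balancing} -- I reduce the window width of every rule to exactly $2$. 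This last step is the heart of the construction: a rule acting on a window of width $w>2$ is decomposed into a sequence of overlapping width-$2$ rewrites by introducing auxiliary ``marker'' or ``intermediate'' symbols that carry the partial state of the longer rewrite from one adjacent pair to the next, effectively letting a signal walk across the window one position at a time while remembering what it has read.

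\textbf{The main obstacle} I expect is precisely keeping the 2-balancing step both correct and symmetric at once. Introducing auxiliary symbols to split a wide rule into width-$2$ steps is routine on its own, but once I also close the system under reversal I must ensure that these intermediate ``half-applied'' configurations cannot interact across different positions of the word or across different rule applications to produce a derivation with no counterpart in $M$. The clean way to handle this is to give each auxiliary symbol enough internal bookkeeping (which rule is being applied, and how far the signal has advanced) that at any moment at most one rewrite is ``in progress'' in the relevant region and its state is uniquely determined, so that the only way to complete or undo it is the intended forward or backward step. I would then verify a simulation invariant: every word reachable from the start word that contains no auxiliary symbol encodes a genuine configuration of $M$, and the auxiliary symbols appear only transiently during the simulation of one transition. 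Granting this invariant, correctness of the reduction and its symmetry follow, and the $\leq^m_{log}$ bound is immediate since the alphabet and rule set are fixed (depending only on $M$) and the start and target words are computed from the input in logarithmic space.
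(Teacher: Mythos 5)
Your overall route is the paper's route: fix a deterministic machine $M$ for a \cclass{PSPACE}-complete language with a polynomial space bound, encode a configuration as a fixed-length string in which the head cell carries a composite (state, symbol) letter, pad with blanks so the length never changes, observe that determinism makes the symmetric closure harmless, and note that the map $x\mapsto(s_x,t_x)$ is logspace. Two points, however, deserve attention.

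First, the step you single out as ``the heart of the construction'' and ``the main obstacle'' --- decomposing rules of window width $w>2$ into width-$2$ rewrites via auxiliary marker symbols --- is not needed at all. With the composite head letter, every transition of $M$ touches only the head cell and at most one neighbour: a stationary move is $\big((q,a)c,(p,b)c\big)$, a right move is $\big((q,a)c,b(p,c)\big)$, a left move is $\big(c(q,a),(p,c)b\big)$. The system is therefore $2$-balanced by construction, with no auxiliary symbols and no ``signal walking across the window.'' This matters because your sketch of that decomposition is the one place where the argument could genuinely fail: in a \emph{symmetric} system your proposed invariant (``every word reachable from the start word that contains no auxiliary symbol encodes a genuine configuration'') is awkward to establish directly, since backward rules can be fired from half-applied intermediate words in unintended ways. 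The paper does carry out exactly such an auxiliary-symbol decomposition, but only in a subsequent lemma (to make each rule change a single letter), and the clean correctness argument there is a retraction $\phi:\Gamma^*\to\Sigma^*$ that erases the auxiliary symbols and is shown to map every single rewrite step to a rewrite step or an equality --- not a reachability invariant. If you do introduce auxiliary symbols anywhere, you should prove correctness via such a retraction.

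Second, two smaller omissions: (i) you argue only hardness; membership in \cclass{PSPACE} needs the (one-line) observation that a balanced system preserves length, so nondeterministic search over equal-length words plus Savitch suffices. (ii) Your symmetrization argument (``rearrange any derivation into a forward run followed by a partial backward run'') works, but is smoother if you also adopt the convention that the accepting configuration has a self-loop (the machine idles after accepting): then in the out-degree-one configuration digraph, undirected reachability of $t_x$ coincides with directed reachability, which is the form of the claim you actually need.
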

\begin{proof}
Since only words of the same length can be reached by application of rules in a balanced SRS, it suffices to nondeterministically search all words of the same length to solve the problem in nondeterministic polynomial space. By Savitch's theorem \cite{savitch1970}, this places the problem in \cclass{PSPACE}.

Let $M=(\Sigma,Q,q_0,q_{acc},q_{rej},\delta)$ be a deterministic Turing Machine working in space bounded by a polynomial $p(|x|)$ which accepts any \cclass{PSPACE}-complete language. (By starting from a fixed \cclass{PSPACE}-complete problem we show the word problem to be hard for a certain fixed SRS; starting from any language in \cclass{PSPACE} we would only show that the more general word problem, where the SRS is given as input, is \cclass{PSPACE}-complete). $\Sigma$ is the tape alphabet of $M$, $Q$ is the set of states, $q_0,q_{acc},q_{rej}$ are the initial, accepting, and rejecting state respectively, and $\delta: Q\times \Sigma \to Q\times \Sigma \times \{\cdot,L,R\}$ is the transition function of $M$. 
Let $\tapeL,\tapeR\in\Sigma$ denote the left and right end-markers.
Assume w.l.o.g.\ that the machine clears the tape and moves its head to the left end when reaching the accepting state.

For any input $x\in\Sigma^*$ we encode a configuration of the Turing Machine by a word of length exactly $p(|x|)$ over the alphabet $\Gamma = \Sigma \cup (\Sigma \times Q) \cup\{\tapeE\}$. If the tape content is $\tapeL a_1 a_2 \dots a_n \tapeR$ for some $a_1,\dots,a_n\in \Sigma$, the head's position is $i\in\{0,1,\dots,n+1\}$ and the machine's state is $q$, then we define the corresponding word to be the tape content padded with $\tapeE$ symbols and with $a_i$ replaced by $(q,a_i)$, that is $\tapeL{}a_1\dots{}a_{i-1}(q, a_i)a_{i+1}\dots{}a_n\tapeR \tapeE\tapeE \dots \tapeE \in \Gamma^{p(|x|)}$. The initial configuration is then encoded as $s_x=(q_0,\tapeL){x}\tapeR\tapeE\tapeE\dots\tapeE\in \Gamma^{p(|x|)}$ and the only possible accepting configuration is encoded as $t_x=(q_{acc},\tapeL)\tapeR\tapeE\tapeE\dots\tapeE \in \Gamma^{p(|x|)}$. Since $M$ never uses more than $p(|x|)$ space on input $x$, our encoding is well defined for all configurations appearing in the execution of $M$ on $x$. 
So $M$ accepts input $x$ if and only if from $s_x$ one reaches the configuration $t_x$ by repeatedly applying the transition function.
Such an application corresponds exactly to
the following (ordered) string rewriting rules, in the encodings:
\begin{itemize}
\item $\big( (q,a)c\ ,\ (p,b)c \big)$\quad for $q\in Q$,\ $a,c\in\Sigma$ and $\delta(q,a)=(p,b,\cdot)$,
\item $\big( (q,a)c\ ,\ b(p,c) \big)$\quad for $q\in Q$,\ $a,c\in\Sigma$ and $\delta(q,a)=(p,b,R)$,
\item $\big( c(q,a)\ ,\ (p,c)b \big)$\quad for $q\in Q$,\ $a,c\in\Sigma$ and $\delta(q,a)=(p,b,L)$.
\end{itemize}

The transition relation isn't symmetric, but since the machine $M$ is deterministic, the configuration digraph (with machine configurations as vertices and the transition function as the adjacency relation) has out-degree 1. The configuration $t_x$ (which is a configuration in the accepting state) has a loop. Therefore from any configuration, $t_x$ is reachable by a directed path if and only if it is reachable by any path. This means that $M$ accepts input $x$ if and only if applying the transition rules to $s_x$ leads to $t_x$ if and only if $s_x \reach_R t_x$,  where $R$ is the symmetric closure of the above rules, i.e., the 2-balanced symmetric SRS over $\Gamma$ with rules:
\begin{itemize}
\item $\{(q,a)c,(p,b)c\}$\quad for $q\in Q$,\ $a,c\in\Sigma$ and $\delta(q,a)=(p,b,\cdot)$,
\item $\{(q,a)c , b(p,c)\}$\quad for $q\in Q$,\ $a,c\in\Sigma$ and $\delta(q,a)=(p,b,R)$,
\item $\{c(q,a) , (p,c)b\}$\quad for $q\in Q$,\ $a,c\in\Sigma$ and $\delta(q,a)=(p,b,L)$.
\end{itemize}
Since the map $x\mapsto (s_x,t_x)$ is computable in logarithmic space, this proves the world problem of $(\Gamma,R)$ to be \cclass{PSPACE}-hard.
\qed\end{proof}

This result can be slightly strengthened to give a system where only one symbol at a time can be changed. To that aim, it suffices to replace a rule changing two symbols with a sequence of rules using two new intermediary symbols.
\begin{lemma}\label{lem:smallthue}
There is a 2-balanced symmetric SRS $(\Gamma,R)$ whose word problem is PSPACE-complete and such that for every rule $\{a_1a_2,b_1b_2\}\in R$ either $a_1=b_1$ or $a_2=b_2$.
\end{lemma}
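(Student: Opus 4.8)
The plan is to start from the system $(\Gamma,R)$ produced by Theorem~\ref{thm:thue} and split exactly those rules that rewrite both letters, keeping the rest. Among the rules of $R$, the ones arising from a non-moving transition $\delta(q,a)=(p,b,\cdot)$ already have the form $\{(q,a)c,(p,b)c\}$ and leave the right letter fixed; only the two moving families, e.g.\ $\{(q,a)c,b(p,c)\}$, change both positions. For each rule $\{a_1a_2,b_1b_2\}$ with $a_1\neq b_1$ and $a_2\neq b_2$ I would introduce two fresh letters $c_1,c_2$, private to that rule, and replace it by the four-rule chain
\begin{gather*}
\{a_1a_2,\,c_1a_2\},\quad \{c_1a_2,\,c_1c_2\},\quad \{c_1c_2,\,b_1c_2\},\quad \{b_1c_2,\,b_1b_2\}.
\end{gather*}
Each of these unordered pairs leaves one of the two coordinates unchanged, so the resulting system $(\Gamma',R')$ is $2$-balanced, symmetric, and has the shape demanded by the lemma. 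Since $R'$ is still balanced, membership in \cclass{PSPACE} follows from the same search argument used in Theorem~\ref{thm:thue}.

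For hardness I would reduce from the word problem of $(\Gamma,R)$ by the identity map on instances: given $s,t\in\Gamma^*$ I output the same pair, now read over $\Gamma'\supseteq\Gamma$. Everything then rests on the equivalence $s\reach_{R'}t\iff s\reach_R t$ for fresh-letter-free words $s,t\in\Gamma^*$. The forward implication is routine: replay the given $R$-derivation, expanding each application of a split rule into its four-step chain and copying every other step verbatim.

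The converse is the crux and the place where I expect the real difficulty. An $R'$-derivation between two fresh-free words may pass through words containing $c_1,c_2$, and I must show this never produces a shortcut absent in $R$. Two structural facts drive the argument. First, every rule is length- and alignment-preserving, overwriting a fixed pair of adjacent cells, so the cell positions can be tracked globally throughout a derivation. Second, each fresh letter is inert: it occurs in no rule outside the four of its own chain, all of which act on one fixed adjacent pair of cells, and in particular every window incident to a cell holding a fresh letter admits only chain steps; the central word $c_1c_2$ is therefore completely locked, since no rule matches the pair on either side of it, so a chain that reaches $c_1c_2$ can only return to $c_1a_2$ or advance to $b_1c_2$. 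Exploiting inertness I would argue that the steps of a single chain instance can be made consecutive and then contracted to a single $R$-step on the same two cells, while no chain can be left incomplete because $t$ is fresh-free and every introduced fresh letter must eventually be erased. The main obstacle is making this serialization rigorous: foreign steps can interfere at the cell adjacent to (but not occupied by) a fresh letter, and such a step does not simply commute past the chain. The key point to establish is that the need to eventually erase the fresh letter forces every such foreign excursion to be a round trip that restores that cell, after which it can be pushed entirely before the chain; the fully locked middle state $c_1c_2$ is exactly the device that keeps the two ``leaky'' endpoint phases separated and makes this reordering go through.
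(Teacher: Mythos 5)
Your construction is the paper's construction (the paper does it one rule at a time with two fresh letters $X,Y$, you do all rules at once with private letters per rule --- an immaterial difference), and your forward direction and PSPACE membership are fine. But the converse direction, which you correctly identify as the crux, is not proved: you propose to serialize each chain instance, contract it to a single $R$-step, and handle interleaved ``foreign'' steps by showing every excursion at the cell adjacent to a fresh letter is a round trip that can be commuted before the chain. You explicitly leave this reordering unestablished, and it is genuinely delicate --- chains can nest and overlap (e.g.\ the right cell of one chain can itself become the left fresh letter of another chain instance before the first chain resolves), so a clean ``make the four steps consecutive'' normalization requires real work and a careful induction on derivation length or some well-founded measure. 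As written, the proposal is a plan for a proof of the hard direction, not a proof.

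The paper avoids serialization entirely with a projection argument you are missing. Define $\phi:\Gamma'^*\to\Gamma^*$ by replacing every occurrence of the substring $c_1c_2$ with $a_1a_2$, then every remaining $c_1$ with $a_1$ and every remaining $c_2$ with $b_2$ (per chain). Then one checks \emph{locally}, for each single rewrite step $u\thuerule_{R'}v$, that $\phi(u)\thuerule_R\phi(v)$ or $\phi(u)=\phi(v)$: the four chain rules project to equality except for the middle step $c_1c_2\leftrightarrow b_1c_2$, which projects to the original rule $\{a_1a_2,b_1b_2\}$, and every non-chain rule projects to itself. Since $\phi$ fixes fresh-letter-free words, applying $\phi$ to an arbitrary $R'$-derivation from $s$ to $t$ immediately yields $s\reach_R t$, with no reordering, no completeness-of-chains argument, and no analysis of interference. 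You should replace your serialization plan with this retraction; it turns the crux into a finite case check.
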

\begin{proof}
Let $(\Sigma,R)$ be the 2-balanced symmetric SRS from Theorem~\ref{thm:thue}. Suppose $\{a_1a_2,b_1b_2\}$ is a rule of $R$ in which $a_1\neq b_1$ and $a_2\neq b_2$. We construct a 2-balanced symmetric SRS $(\Gamma,S)$ with one such rule fewer, preserving \cclass{PSPACE}-completeness of the word problem. The claim then follows inductively.

Let $\Gamma=\Sigma\cup\{X,Y\}$ where $X$ and $Y$ are new symbols. Let $S$ be equal to $R$ with rules $\{a_1a_2,Xa_2\},\{X a_2, XY\}, \{XY,b_1 Y\}, \{b_1 Y, b_1 b_2\}$ added and rule $\{a_1a_2,b_1b_2\}$ removed.
We show that for any $s,t\in\Sigma^*$ it holds that $s\reach_R t$ if and only if $s\reach_S t$, which implies that our construction preserves \cclass{PSPACE}-completeness.

Clearly if $s\reach_R t$ then $s\reach_S t$, because replacing $a_1a_2$ with $b_1b_2$ can be done in $S$ by replacing $a_1a_2$ with $Xa_2$, then $XY$, then $b_1 Y$ and finally $b_1b_2$. Suppose now $s\reach_S t$ for some $s,t\in\Sigma^*$. Then there is a sequence $s=u_0,u_1,u_2,\dots,u_l=t$ of words $u_i\in\Gamma^*$ such that $u_i\thuerule_S u_{i+1}$.
Let $\phi : \Gamma^*\to \Sigma^*$ be defined by replacing all $XY$ substrings of a word with $a_1a_2$, then replacing all remaining $X$ symbols with $a_1$ and all remaining $Y$ symbols with $b_2$. It is easy to check that $\phi(u_i)\thuerule_R\phi(u_{i+1})$ or $\phi(u_i)=\phi(u_{i+1})$.
Since $\phi(u_0)=\phi(s)=s$ and $\phi(u_l)=\phi(t)=t$, this implies that $s\reach_R t$. 
\qed\end{proof}
 
\section{A simple intermediary problem}
We define an intermediary problem that highlights how simple a reconfiguration problem achieving \cclass{PSPACE}-hardness can be. Given a pair $H=(\Sigma,E)$, where $\Sigma$ is an alphabet and $E\subseteq \Sigma^2$ a binary relation between symbols, we say that a word over $\Sigma$ is an \emph{$H$-word} if every two consecutive symbols are in the relation (put differently, no element of $\Sigma^2\setminus E$ is a subword). If one looks at $H$ as a digraph (possibly with loops), a word is an $H$-word iff it is a walk in $H$. The \emph{\problem{$H$-Word Reachability}} problem asks whether two given $H$-words of equal length can be transformed into one another by changing one symbol at a time so that all intermediary steps are also $H$-words.

\begin{theorem}\label{thm:hWordReachability}
There is a digraph $H$ for which \problem{$H$-Word Reachability} is \cclass{PSPACE}-complete.
\end{theorem}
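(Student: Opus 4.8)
The plan is to reduce from the word problem of the 2-balanced symmetric SRS $(\Gamma,R)$ provided by Lemma~\ref{lem:smallthue}, which is \cclass{PSPACE}-complete and, crucially, whose every rule changes a single symbol inside a two-symbol window: each rule has the form $\{ca_2,cb_2\}$ (alter the right symbol, given left context $c$) or $\{a_1c,b_1c\}$ (alter the left symbol, given right context $c$). This is almost exactly the kind of local, one-symbol-at-a-time rewriting that \problem{$H$-Word Reachability} performs, so the reduction should be structural rather than combinatorial. Membership in \cclass{PSPACE} is immediate: an $H$-word of length $n$ is described in polynomial space, the walk condition is checkable locally, and a sequence of single-symbol changes is a path in an exponentially large but polynomially-describable reachability graph, so the problem lies in \cclass{NPSPACE}$=$\cclass{PSPACE} by Savitch's theorem.

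For the hardness direction I would fix a single digraph $H$ built from $(\Gamma,R)$ and exhibit a logspace encoding $w\mapsto W_w$ of words over $\Gamma$ into $H$-words, mapping the instance $(s,t)$ to $(W_s,W_t)$, so that $s\reach_R t$ holds if and only if $W_s$ and $W_t$ are reachable from one another. The natural first attempt --- taking the alphabet of $H$ to be $\Gamma$ itself and letting $E$ forbid exactly the ``bad'' consecutive pairs --- fails, and isolating why pinpoints the whole difficulty: in an $H$-word a change at position $i$ must be compatible with \emph{both} neighbours (both $(w_{i-1},w_i')\in E$ and $(w_i',w_{i+1})\in E$), an \emph{and} of two edge constraints, whereas each rule of $R$ inspects only \emph{one} neighbour. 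A single symbol cannot simultaneously carry enough context to license exactly the allowed one-sided moves and to avoid spuriously \emph{locking} a neighbouring cell whose $R$-legal change would violate an edge constraint.

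To get around this I would enrich the alphabet so that the walk structure records local context explicitly. Concretely, I expect to interleave ``symbol cells'' carrying elements of $\Gamma$ with ``link cells'' that record the ordered pair of adjacent symbols, and to add a few transitional symbols in the spirit of the $X,Y$ symbols of Lemma~\ref{lem:smallthue}. The digraph $H$ is then designed so that every word over $\Gamma$ encodes to a walk, and so that a single rule application of $R$ is realised by a short fixed sequence of single-symbol $H$-moves that first places a transitional marker in a link cell, then rewrites the affected symbol cell, then restores the link, each individual move being a legal step between $H$-words. Faithfulness is then argued exactly as in Lemma~\ref{lem:smallthue}: define a projection $\phi$ from $H$-words back to $\Gamma$-words (reading the symbol cells and collapsing transitional configurations onto the word before or after a micro-step) and check that every legal $H$-move projects either to an application of a rule of $R$ or to the identity, while every rule application lifts to a sequence of $H$-moves; composing these gives $s\reach_R t$ if and only if $W_s$ and $W_t$ are interreachable.

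The main obstacle is designing $H$ and the transitional symbols so that the simulation is \emph{tight in both directions}: every move of $R$ must be expressible, yet the bidirectional edge constraints must rule out all illegal $H$-moves --- in particular changes of a cell that no rule of $R$ sanctions, and moves that would strand a transitional marker. Verifying that $\phi$ is well-defined and that no unintended shortcut through the enlarged alphabet creates a reachability absent in $R$ is where the real work lies; everything else is bookkeeping, and the logspace-computability of $w\mapsto W_w$ then yields the claimed $\leq^m_{log}$-reduction.
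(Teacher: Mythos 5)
Your plan is, in outline, exactly the paper's proof: reduce from the SRS of Lemma~\ref{lem:smallthue}, enrich the alphabet so that local context is recorded inside the symbols, add one transitional symbol per rule to bridge the two-sided edge constraints, and prove the converse by a projection $\phi$ back to $\Gamma$-words under which every single-symbol move becomes a rule application or the identity. You have also correctly isolated the central obstacle (a change in an $H$-word must satisfy \emph{both} neighbours, while a rule of $R$ consults only one). So the approach is right.

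The gap is that the theorem asserts the existence of a specific digraph $H$, and you never exhibit it; moreover the one concrete design choice you do commit to --- interleaving ``symbol cells'' over $\Gamma$ with separate ``link cells'' --- is the wrong variant and would make the verification substantially harder. The paper drops the symbol cells entirely and encodes $a_1\dots a_n$ purely as overlapping pairs, $\tapeL(\tapeE,a_1)(a_1,a_2)\dots(a_n,\tapeE)\tapeR$, with edges $((a,b),(b,c))$. Then a rule $\{a_1a_2,b_1a_2\}$ touches exactly two \emph{consecutive} positions of the encoding, and a single marker $x_i$ per rule (in-edges from $(\cdot,a_1)$ and $(\cdot,b_1)$, out-edges to $(a_2,\cdot)$) suffices: replace $(a_1,a_2)$ by $x_i$, recolor $(\cdot,a_1)$ to $(\cdot,b_1)$, replace $x_i$ by $(b_1,a_2)$ --- three legal moves. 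In your interleaved version a symbol cell is constrained by \emph{two} link cells, so a marker in one link cell does not free it; you would need markers on both sides, including one whose out-neighbourhood is all of $\Gamma$, and then the ``no unintended shortcut'' argument (that $\phi$ maps every legal move to a legal $R$-step) is precisely the part that becomes delicate and that you acknowledge you have not done. Until the digraph and both directions of the equivalence are written out --- including the bookkeeping for the endmarkers $\tapeL,\tapeR$ and the fact that $\phi$ is well defined even when a marker is present --- this is a correct roadmap to the proof rather than a proof.
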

\begin{proof}
Let $(\Gamma,S)$ be the 2-balanced symmetric string rewriting system from Lemma~\ref{lem:smallthue} (so if $\{a_1a_2,b_1b_2\}\in S$ then $a_1=b_1$ or $a_2=b_2$). Let $S=\{S_1,\dots,S_m\}$.

Let $\tapeE,\tapeL,\tapeR,x_1,\dots,x_m$ be new symbols, let $\Delta_1 = \{\tapeL,\tapeR,x_1,\dots,x_m\}$, $\Delta_2 = (\Gamma\cup\{\tapeE\})\times(\Gamma\cup\{\tapeE\})$, and let $\Delta = \Delta_1 \cup \Delta_2$. We will call $\Delta_1$ \emph{special symbols} and $\Delta_2$ \emph{pair symbols}. Let $H=(\Delta,E$), where we define $E\subseteq \Delta^2$ as the relation containing the following pairs
\begin{itemize}
\item $((a,b),(b,c))$\quad for any $a,b,c\in \Gamma$,
\item $(\tapeL, (\tapeE,a))$\quad for any $a\in \Gamma$,
\item $((a,\tapeE),\tapeR)$\quad for any $a,b,c\in \Gamma$,
\item $((\cdot,a_1),x_i)$,
\item $((\cdot,b_1),x_i)$,
\item $(x_i, (a_2,\cdot))$,
\item $(x_i, (b_2,\cdot))$ \quad for  any $\cdot\in\Gamma$ and $i\in\{1,\dots,m\}$ such that $S_i=\{a_1a_2,b_1b_2\}$.
\end{itemize}

Let $(s,t)\in \Gamma^*\times \Gamma^*$ be an instance of the word problem for $S$, w.l.o.g. $|s|=|t|=n$. Define $\psi:\Gamma^n\to \Delta^{n+3}$ as 
$$\psi(a_1a_2\dots a_n) = \tapeL (\tapeE,a_1)(a_1,a_2)(a_2,a_3)\dots(a_{n-1},a_n)(a_n,\tapeE) \tapeR$$
It is easy to see that if $s\reach_S t$ then $\psi(s)$ can be transformed into $\psi(t)$, e.g., applying the rule $S_i=\{a_1 a_2, b_1 a_2\}$ corresponds to replacing $(\cdot, a_1)(a_1,a_2)(a_2,\cdot)$ by $(\cdot, a_1)x_i(a_2,\cdot)$, then 
$(\cdot, b_1)x_i(a_2,\cdot)$, then $(\cdot, b_1)(b_1,a_2)(a_2,\cdot)$. We will show the other direction, that if $\psi(s)$ can be transformed into $\psi(t)$, then $s\reach_S t$. Since $\psi$ is computable in logarithmic space, this will imply our claim of \cclass{PSPACE}-completeness.

Indeed, suppose that there is a sequence of $H$-words $\psi(s)=u_0,u_1,\dots,u_l=\psi(t)$ with $u_j\in\Delta^{n+3}$, such that $u_j$ differs from $u_{j+1}$ only at one position. In any $H$-word $v=v_1 v_2\dots v_{n+3}\in\Delta^{n+3}$ there cannot be two consecutive special symbols. We can thus define a word $\phi(v)$ of length $n$ over $\Gamma$ such that its $i$-th symbol, for $i\in\{1,\dots,n\}$, is the second element of $v_{i+1}$ if $v_{i+1}$ is a pair symbol and the first element of $v_{i+2}$ if $v_{i+2}$ is a pair symbol (either case must hold and if both do, the definitions agree by construction of $E$). In particular $\phi(\psi(v))=v$ for any $v\in\Gamma^n$. We argue that $\phi(u_{j-1})\reach_S \phi(u_{j})$ for $j\in\{1,\dots,l\}$.

Notice that the special symbol $\tapeL$ must precede a pair symbol $(\tapeE,\cdot)$ for some $\cdot\in \Gamma$ and any such pair symbol must be preceded by $\tapeL$. Since only one symbol at a time can be changed, it follow inductively that for each $j\in\{0,\dots,l\}$ the first two symbols of $u_j$ must be $\tapeL (\tapeE,\cdot)$ for some $\cdot\in\Gamma$ and $\tapeL$ appears nowhere else. Similarly for the last two symbols, $(\cdot,\tapeE) \tapeR$ for some $\cdot\in\Gamma$.

Since $u_{j-1}$ and $u_j$ differ at only one position, there are non-empty words $v,w\in\Delta^*$ and symbols $a,b\in\Delta$, $a\neq b$ such that $u_{j-1}=vaw$ and $u_j=vbw$. If $a$ or $b$ is a special symbol then both the last symbol of $v$ and the first symbol of $w$ are pair symbols, so $\phi(u_{j-1})=\phi(u_{j})$. Otherwise, let $a=(a_1,a_2), b=(b_1,b_2)$. Assume without loss of generality that $a_1\neq b_1$ and $a_2=b_2$ (the case $a_1=b_1,a_2\neq b_2$ is analogous and the case $a_1\neq b_1,a_2\neq b_2$ can be split by showing that $\phi(u_{j-1})\reach_S \phi(u')$ and $\phi(u') \reach_S \phi(u_j)$ for $u'=v(b_1,a_2)w$, which can easily be checked to be an $H$-word).
If the last symbol of $v$ is a pair symbol $(c,d)$, then $d=a_1$ and $d=b_1$, contradicting our assumption. If the last symbol of $v$ is $\tapeL$, then $a_1=b_1=\tapeE$. Finally if the last symbol of $v$ is $x_i$ for some $i\in\{1,\dots,m\}$, then $S_i$ must be equal $\{ca_1,c'b_1\}$ for some $c,c'\in\Gamma$. Since $a_1\neq b_1$, we have $c=c'$ and the last but one symbol of $v$ must be a pair $(\cdot,c)$ for some $\cdot\in\Gamma\cup\{\tapeE\}$. Thus $\phi(v(b_1,a_2)w)$ is obtained from $\phi(v(a_1,a_2)w)$ by replacing the symbol $a_1$ at position $|v|$, which is preceded by a $c$, by the symbol $b_1$, that is, $\phi(v(b_1,a_2)w) \thuerule_S \phi(v(a_1,a_2)w)$.
\qed\end{proof}

Notice that in the case of \problem{$H$-Word Reachability}, the decision problem asking for the existence of a solution of given length is trivial (even more so since $H$ is fixed), and even counting solutions or extensions of a partial assignment to solutions is easy. This shows that the complexity of a reconfiguration variant of a combinatorial problem can be very different from the complexity of the original problem and its static variants.

\section{Hardness in bounded bandwidth}
In this section we give simple reductions that show several problems studied earlier to be \cclass{PSPACE}-complete in graphs of bounded bandwidth.

The bandwidth of a graph is the minimum over all assignments $f:V(G)\to\mathbb{N}$ of the quantity $\max_{uv\in E(G)} |f(u)-f(v)|$. A graph of bandwidth $b$ can easily be seen to have pathwidth and treewidth at most $b$ (see \cite{schiex1999note}) and maximum degree at most $2b$. On the other hand, the family of stars $K_{1,n}$ gives an example with bounded pathwidth but unbounded bandwidth.
A bucket arrangement of a graph is a partition of the vertex set into a sequence of buckets, such that the endpoints of any edge are either in one bucket or in two consecutive buckets. If a graph has a bucket arrangement where each bucket has at most $b$ vertices, then it has bandwidth at most $2b$ (arrange one bucket after another, with any ordering within one bucket; see also \cite{feige2005bandwidth}).

In \problem{Shortest Path Reachability} one is given a graph $G$ with two distinguished vertices $s,t$ and two shortest paths from $s$ to $t$ ($s-t$ paths). The question is whether one path can be reconfigured into the other by a sequence of shortest $s-t$ paths that differ by one vertex from their predecessor. This provided the first natural example of a \cclass{PSPACE}-hard reconfiguration problem~\cite{bonsma_rerouting_2012} whose underlying combinatorial problem is easy. We prove it remains \cclass{PSPACE}-complete in graphs of bounded bandwidth, by a simple interpretation of Theorem~\ref{thm:hWordReachability}.

\begin{proposition}
There is an integer $b$ such that \problem{Shortest Path Reachability} is \cclass{PSPACE}-complete even when limited to graphs of bandwidth at most $b$.
\end{proposition}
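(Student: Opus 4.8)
The plan is to reduce from \problem{$H$-Word Reachability} for the fixed digraph $H=(\Delta,E)$ produced by Theorem~\ref{thm:hWordReachability}, turning each $H$-word of length $n$ into a shortest $s$--$t$ path in a layered graph whose bandwidth depends only on $|\Delta|$, hence on $H$ alone. Concretely, given two $H$-words of common length $n$, I would build an undirected graph $G$ with $n$ internal layers $L_1,\dots,L_n$, where each $L_i$ contains one vertex $(\sigma,i)$ for every $\sigma\in\Delta$, together with a source $s$ and a sink $t$. I connect $s$ to every vertex of $L_1$, every vertex of $L_n$ to $t$, and a vertex $(\sigma,i)$ to $(\tau,i+1)$ exactly when $(\sigma,\tau)\in E$. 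No other edges are added, so all edges run between consecutive layers, with $s$ and $t$ playing the role of layers $L_0$ and $L_{n+1}$.

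The key structural claim is that the shortest $s$--$t$ paths correspond bijectively to the $H$-words of length $n$. Since every edge of $G$ joins layers whose indices differ by exactly one, any $s$--$t$ walk has length at least $n+1$, and a walk of length exactly $n+1$ must increase the layer index by one at every step, thus visiting precisely one vertex in each of $L_1,\dots,L_n$. Reading off the symbols of these vertices yields a word $\sigma_1\dots\sigma_n$, and the consecutive-layer edges guarantee $(\sigma_i,\sigma_{i+1})\in E$ for all $i$, i.e.\ an $H$-word; conversely every $H$-word gives such a path. Thus the distance from $s$ to $t$ is $n+1$ and the shortest paths are exactly these monotone paths. Because $s$ is joined to all of $L_1$ and $t$ to all of $L_n$, no first- or last-symbol constraints are imposed, matching the fact that reconfiguration of $H$-words may alter any position.

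It remains to match the reconfiguration steps. Two shortest paths differ by a single internal vertex iff they agree on all layers but one, say layer $i$, where one uses $(\sigma_i,i)$ and the other $(\sigma_i',i)$; for both to be shortest paths this requires the new vertex to be adjacent to its neighbours in layers $i-1$ and $i+1$ (a vacuous condition at the boundary, where $s$ and $t$ are adjacent to everything), which is precisely the condition that changing the $i$-th symbol keeps the word an $H$-word. Hence the reconfiguration graph of shortest $s$--$t$ paths in $G$ is isomorphic to the reconfiguration graph of $H$-words of length $n$, so one given word reaches the other iff the corresponding paths do. The construction is clearly computable in logarithmic space. For the bandwidth bound I would use the bucket arrangement whose buckets are the layers $L_0,\dots,L_{n+1}$: each bucket has at most $|\Delta|$ vertices and every edge stays within one bucket or joins consecutive ones, so $G$ has bandwidth at most $2|\Delta|$, a constant depending only on $H$. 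Membership in \cclass{PSPACE} is inherited from the unrestricted problem, since reachability in the implicitly described, exponentially large reconfiguration graph is decidable in nondeterministic polynomial space and hence in \cclass{PSPACE} by Savitch's theorem.

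The one point requiring care is the exact identification of the two reconfiguration relations: I must verify both that there are no unexpected shortest paths, which the consecutive-layer distance argument handles, and that a single-vertex path move is admissible exactly when the corresponding single-symbol word change preserves the $H$-word property, in both directions. Once this is established as an isomorphism of reconfiguration graphs, reachability transfers verbatim and the bandwidth bound is immediate.
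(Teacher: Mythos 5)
Your construction is exactly the one in the paper: a layered graph with one vertex per symbol per position, source and sink joined to the full first and last layers, inter-layer edges given by $E$, and the bandwidth bound via the bucket arrangement by layers. The correspondence between shortest paths and $H$-words and between single-vertex moves and single-symbol changes is argued the same way, so the proposal is correct and essentially identical to the paper's proof.
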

\begin{proof}
Let $H=(\Sigma,R)$ be the graph from  Theorem~\ref{thm:hWordReachability}, we show the lemma for $b=2|\Sigma|$. Let $s,t\in\Sigma^*$ be an instance of \problem{$H$-Word Reachability} with $|s|=|t|=n$.
We construct an instance $(G_n, P_s,P_t)$ of \problem{Shortest Path Reachability} as follows.
The graph $G_n$ depends only on $n$ (and the fixed graph $H$).
Its vertex set contains $v_0,v_{n+1}$ and vertices $v_i^a$ for all $i\in\{1,\dots,n\}$ and $a\in\Sigma$. Its edge set contains $v_0 v_1^a$ and $v_n^a v_{n+1}$ for all $a\in \Sigma$, and $v_i^a v_{i+1}^b$ for all $(a,b)\in R$ and $i\in\{1,\dots,n-1\}$. Let $V_i = \{v_i^a \mid a\in\Sigma\}, V_0=\{v_0\}, V_{n+1}=\{v_{n+1}\}$. The sets $V_i$ give a bucket arrangement, so the bandwidth of $G_n$ is at most $2|\Sigma|$.

A shortest path from $v_0$ to $v_{n+1}$ must go through exactly one vertex in each set $V_i$ and thus defines a word. It is easy to see that this defines an bijection between $H$-words and shortest paths. We let $P_s,P_t$ be the paths corresponding to $s,t$. Changing one symbol corresponds to changing one vertex of the path, so the instances are clearly equivalent.
\qed\end{proof}

We prove the same result for \problem{Maximum Independent Set Reachability}. In this problem one is given a graph $G$ and two maximum independent sets in it. The question is whether one can be reconfigured into the other by a sequence of \emph{token jumps} -- moves that consist of removing a vertex and adding another. Since a maximum independent set is maintained throughout, the vertex added must be adjacent to the one removed (otherwise the set with both added would be independent and larger). Our hardness results thus applies also to \emph{token sliding} and \emph{token addition removal} models (see \cite{kaminski2011shortest} for definitions and some equivalences). The graph constructed in the proof can be obtained simply by taking complements of edge sets on each bucket pair $V_i\cup V_{i+1}$ from the previous construction.
\begin{proposition}
There is an integer $b$ such that \problem{Maximum Independent Set Reachability} is \cclass{PSPACE}-complete even when limited to graphs of bandwidth at most $b$.
\end{proposition}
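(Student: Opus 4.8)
The plan is to mirror the \problem{Shortest Path Reachability} reduction almost verbatim, but replace the path structure with an independence structure, using the hint given in the paragraph preceding the statement: take the previous graph $G_n$ and, on each consecutive bucket pair $V_i \cup V_{i+1}$, complement the edge set. Concretely, I would build a graph $G'_n$ on the same vertex set $\{v_0, v_{n+1}\} \cup \{v_i^a : i \in \{1,\dots,n\}, a\in\Sigma\}$, partitioned into the same buckets $V_0,\dots,V_{n+1}$. The intended maximum independent sets will select exactly one vertex from each bucket, so that a maximum independent set again encodes a word of length $n$ (the $H$-word read off the chosen vertices $v_i^a$).

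First I would specify the edges of $G'_n$ so as to enforce ``exactly one vertex per bucket'' for any maximum independent set. The natural way is to make each bucket $V_i$ a clique (so at most one vertex per bucket can be chosen), and between consecutive buckets $V_i$ and $V_{i+1}$ to put an edge $v_i^a v_{i+1}^b$ precisely when $(a,b)\notin R$ --- the complement of the adjacency used before. Then an independent set picking $v_i^a$ and $v_{i+1}^b$ is allowed exactly when $(a,b)\in R$, which is the $H$-word condition. I would add $v_0$ adjacent to all of $V_1$ except the $v_1^a$ that may start a word (and symmetrically $v_{n+1}$), arranged so that every maximum independent set has size exactly $n+2$ and must contain both $v_0$ and $v_{n+1}$ together with one vertex from each $V_i$; here I would double-check the endpoint gadgets against the definitions $v_0 v_1^a$ and $v_n^a v_{n+1}$ in the source construction. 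Since this partition is again a bucket arrangement with buckets of size at most $|\Sigma|+1$, the bandwidth of $G'_n$ is bounded by a constant depending only on the fixed graph $H$, so $b = 2(|\Sigma|+1)$ suffices.

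Next I would verify the two directions of the equivalence. For the size claim, I would argue that any independent set contains at most one vertex per bucket (cliques), hence has size at most $n+2$, and that the sets encoding $H$-words achieve this bound and are therefore maximum; conversely any maximum independent set must hit every bucket exactly once and, by the complemented inter-bucket edges, the chosen symbols form an $H$-word. For the reconfiguration claim, I would show that a single token jump between two maximum independent sets corresponds to changing exactly one symbol of the encoded word: because the set stays maximum and of full size, the jump must move a token within one bucket $V_i$ (removing $v_i^a$, adding $v_i^{a'}$), which is exactly a one-symbol change, and the independence constraint guarantees the resulting word is again an $H$-word. Thus $s,t$ are reconfigurable as $H$-words if and only if $P'_s, P'_t$ (the maximum independent sets encoding $s,t$) are reconfigurable by token jumps, and \cclass{PSPACE}-completeness transfers from Theorem~\ref{thm:hWordReachability}.

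The main obstacle I expect is not the reconfiguration correspondence, which is essentially forced, but pinning down the endpoint gadgets so that the maximum independent set size is \emph{exactly} $n+2$ and every maximum independent set is forced to take one vertex per bucket and nothing extra --- in particular ruling out independent sets that skip a bucket but compensate elsewhere, and confirming that jumps cannot leave the ``one-per-bucket'' regime while remaining maximum. This is a finite verification, but it is where an off-by-one or a missing edge would break the argument, so I would treat the clique-per-bucket plus complemented-adjacency construction carefully and confirm that membership and adjacency are preserved by complementing on each pair $V_i\cup V_{i+1}$ as the hint suggests, noting that overlapping pairs share the bucket cliques consistently.
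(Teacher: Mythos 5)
Your core construction is exactly the paper's: make each bucket $V_i$ a clique, put the \emph{complemented} adjacency $v_i^a v_{i+1}^b$ for $(a,b)\notin R$ between consecutive buckets, observe that a maximum independent set picks one vertex per clique and hence spells an $H$-word, and that a token jump preserving maximality must stay inside a single clique, i.e.\ change one symbol. That part is correct and is the paper's argument verbatim. The one place you diverge is the endpoint gadget, and it is a self-inflicted complication: the paper simply omits $v_0$ and $v_{n+1}$ altogether. Unlike in the shortest-path reduction, where the endpoints are needed to anchor the paths, here they serve no purpose --- once every $V_i$ is a clique, any independent set has at most $n$ vertices, the set $I_s$ encoding the given $H$-word $s$ achieves $n$, so maximum independent sets have size exactly $n$ and are forced to hit every bucket; the obstacle you flag (ruling out sets that skip a bucket and ``compensate elsewhere'') cannot arise. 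Moreover, your proposed gadget --- making $v_0$ adjacent to all of $V_1$ except the vertices ``that may start a word'' --- is not well defined for \problem{$H$-Word Reachability}: any symbol may begin an $H$-word (the constraint is only between consecutive positions), so the gadget would either be vacuous or would alter the instance. Dropping the endpoints removes the issue and tightens the bandwidth bound to $2|\Sigma|$.
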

\begin{proof}
Let $H=(\Sigma,R)$ be the graph from  Theorem~\ref{thm:hWordReachability}, let $p=2|\Sigma|$ and let $s,t\in\Sigma$ be an instance of \problem{$H$-Word Reachability} with $|s|=|t|=n$.
We construct an instance $(G_n, I_s,I_t)$ of \problem{Maximum Independent Set Reachability} as follows. The vertex set of $G_n$ contains vertices $v_i^a$ for all $i\in\{1,\dots,n\}$ and $a\in\Sigma$. Let $V_i = \{v_i^a \mid a\in\Sigma\}$ for $i\in\{1,\dots,n\}$. The edge set of $G_n$ contains an edge between every two vertices of $V_i$ for $i\in\{1,\dots,n\}$ and an edge $v_i^a v_{i+1}^b$ for all $(a,b)\not\in R$ and $i\in\{1,\dots,n-1\}$. The sets $V_i$ give a bucket arrangement, so the bandwidth of $G_n$ is at most $2|\Sigma|$.

An independent set of size $n$ contains exactly one vertex in each clique $V_i$ and thus defines a word. This clearly gives a bijection between $H$-words and maximum independent sets. We let $I_s,I_t$ be the paths corresponding to $s,t$. Since tokens can only move between vertices of one clique, changing one symbol of the corresponding word, the instances are equivalent.
\qed\end{proof}

In \problem{$k$-List-Coloring Reachability} one is given a graph whose vertices are labeled with lists $l(v) \subseteq \{1,\dots, k\}$ and two $k$-list colorings $s,t$ ($k$-colorings of the graph that obey the lists, that is, $s(v)\in l(v)$ for all $v\in V(G)$). The problem asks whether one can be transformed into the other by changing one color at a time and maintaining a proper $k$-list-coloring throughout. A \emph{chain of onions of width $b$ and length $n$} is the graph with vertex set $\{u_1,\dots,u_n\}\cup \{v_i^j \mid i=1,\dots,n-1,\ j=1,\dots,b\}$ and edges $u_i v_i^j $ and $v_i^j u_{i+1}$ for $i\in\{1,\dots,n-1\},j\in\{1,\dots,b\}$. Clearly such graphs have bandwidth $b$ (and pathwidth 2).
\begin{proposition}\label{lem:listColoring}
There are integers $k,b$ such that \problem{$k$-List-Coloring Reachability} is \cclass{PSPACE}-complete even when limited to chains of onions of width $b$.
\end{proposition}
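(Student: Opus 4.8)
The plan is to reduce from \problem{$H$-Word Reachability} (Theorem~\ref{thm:hWordReachability}), using the onion vertices of each bucket to enforce the adjacency relation $R$ of the fixed digraph $H=(\Sigma,R)$ between consecutive $u$-vertices. The basic gadget is this: an onion vertex $v$ adjacent to $u_i$ and $u_{i+1}$ with a two-element list $\{c,d\}$ is uncolourable exactly when $\{c,d\}\subseteq\{c(u_i),c(u_{i+1})\}$, so it forbids the \emph{unordered} colour pair $\{c,d\}$ on its two neighbours. Since $R$ is in general asymmetric while colouring constraints are symmetric, I would break the symmetry with two disjoint palettes: colour $u_i$ from a copy $\Sigma_{p_i}$ of $\Sigma$, where $p_i=i\bmod 2$, so $l(u_i)=\{a_{p_i}\mid a\in\Sigma\}$ with $a_p$ the copy of $a$ in $\Sigma_p$. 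Then for the edge between $u_i$ and $u_{i+1}$ and each forbidden pair $(a,b)\in\Sigma^2\setminus R$ I place one onion vertex with list $\{a_{p_i},b_{p_{i+1}}\}$; because $\Sigma_{p_i}$ and $\Sigma_{p_{i+1}}$ are disjoint, this gadget is uncolourable exactly when $c(u_i)=a_{p_i}$ \emph{and} $c(u_{i+1})=b_{p_{i+1}}$, i.e.\ it forbids precisely the \emph{directed} pair $(a,b)$. This fixes $k=2|\Sigma|$ and width $b=|\Sigma^2\setminus R|$ (identical for every bucket).

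To each $H$-word $w=w_1\cdots w_n$ I associate the canonical colouring $C_w$ given by $c(u_i)=(w_i)_{p_i}$ and, for an onion vertex in bucket $i$ with list $\{a_{p_i},b_{p_{i+1}}\}$, the colour $a_{p_i}$ if $w_i\neq a$ and $b_{p_{i+1}}$ otherwise; since $(w_i,w_{i+1})\in R$ while $(a,b)\notin R$ one cannot have both $w_i=a$ and $w_{i+1}=b$, so $C_w$ is a proper list-colouring. The instance is $(C_s,C_t)$, and I would prove $s\reach_R t$ iff $C_s\reach C_t$, using the projection $\pi$ that sends a proper list-colouring $c$ to the word whose $i$-th letter is the symbol underlying $c(u_i)$. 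The ($\Leftarrow$) direction is straightforward: any proper list-colouring projects to an $H$-word, because if $(\pi(c)_i,\pi(c)_{i+1})\notin R$ the onion vertex guarding that pair would be uncolourable; moreover a single recolouring step either leaves $\pi$ unchanged (when an onion is recoloured) or changes exactly one letter between two $H$-words (when a $u$-vertex is recoloured), i.e.\ it induces at most one word step. As $\pi(C_s)=s$ and $\pi(C_t)=t$, this gives $s\reach_R t$.

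The ($\Rightarrow$) direction is the main obstacle. To simulate a word step changing $w_i$ to $x'$ I must first move the onion neighbours of $u_i$ currently carrying the colour $x'_{p_i}$ out of the way before recolouring $u_i$. The key point is that the needed slack always exists: a blocking onion in bucket $i$ has list $\{x'_{p_i},b_{p_{i+1}}\}$ with $(x',b)\notin R$, and since the \emph{new} word is an $H$-word we have $(x',w_{i+1})\in R$, forcing $w_{i+1}\neq b$, so the onion may be recoloured to $b_{p_{i+1}}$; the onions in bucket $i-1$ are handled symmetrically. Having cleared these, I recolour $u_i$ and then move every onion in buckets $i-1,i$ to its canonical colour for the new word, each step a valid single recolouring because onions in a bucket share no edges and each canonical colour avoids both endpoints. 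Chaining these sub-sequences along a word reconfiguration of $s$ into $t$ yields $C_s\reach C_t$. Since $(C_s,C_t)$ is computable from $(s,t)$ in logarithmic space and membership in \cclass{PSPACE} is routine, this establishes the claim; the only part requiring genuine care is the slack verification above, which is a finite local check given that every word in the sequence is an $H$-word.
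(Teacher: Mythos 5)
Your proposal is correct and follows essentially the same route as the paper: the same chain-of-onions gadget with two disjoint palettes indexed by parity, the same lists $\{a_{p_i},b_{p_{i+1}}\}$ for forbidden pairs, and the same parameters $k=2|\Sigma|$, $b=|\Sigma^2\setminus R|$. Your explicit ``slack'' verification in the forward direction is exactly the detail the paper's proof asserts without elaboration (``This can be done while maintaining a proper coloring''), so if anything your write-up is the more careful one.
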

\begin{proof}
Let $H=(\Sigma,R)$ be the graph from  Theorem~\ref{thm:hWordReachability}, let $b=|\Sigma^2\setminus R|$, $k=2|\Sigma|$ and let $s,t\in\Sigma$ be an instance of \problem{$H$-Word Reachability} with $|s|=|t|=n$. We construct the following instance of \problem{$k$-List-Coloring Reachability}. The graph will be the chain of onions of width $b$ and length $n$ with vertices named as above. Let $\Sigma'$ be a disjoint copy of $\Sigma$, we write $a'$ for the copy of $a\in\Sigma$, the set of available colors will be $\Sigma \cup \Sigma'$. The list of a vertex $u_i$ is $\Sigma$ for $i$ even and $\Sigma'$ for $i$ odd. For each $(a,b)\in \Sigma^2 \setminus R$ we choose a different $j\in\{1,\dots,|\Sigma^2\setminus R|\}$ and let $v_i^j$ have the list $\{a,b'\}$ for $i$ even and $\{a',b\}$ for $i$ odd.

A coloring of vertices $u_i$ defines a word in $\Sigma^n$ (by letting the $i$-th symbol be $a$ if the color of $u_i$ is $a$ or $a'$). If $(a,b)\in\Sigma^2 \setminus R$, then for any even $i$, vertices $u_i,u_{i+1}$, with lists $\Sigma,\Sigma'$ respectively,  are adjacent to some vertex $v_i^j$ with list $\{a,b'\}$. Thus in any proper coloring it cannot  be that $u_i$ has color $a$ and $u_{i+1}$ has color $b'$. Similarly for odd indices, thus $ab$ is never a subword of a word corresponding to a proper coloring. Since this holds for all $(a,b)\in \Sigma^2 \setminus R$, such a word is an $H$-word. Conversely, for any $H$-word, the corresponding coloring of $u_i$ can easily be extended to a proper coloring of all the graph. Changing one color corresponds to changing at most one symbol of the corresponding $H$-word, and changing one symbol corresponds to changing the colors of one vertex $u_i$ and of those vertices adjacent to it that would have the same color. This can be done while maintaining a proper coloring, so the instances are equivalent.
\qed\end{proof}

\begin{proposition}
\label{propo:kcoloring}
There are integers $k,b$ such that \problem{$k$-Coloring Reachability} is \cclass{PSPACE}-complete even when limited to graphs of bandwidth at most $b$.
\end{proposition}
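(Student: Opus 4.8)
The plan is to reduce from \problem{$k$-List-Coloring Reachability} on chains of onions, shown \cclass{PSPACE}-complete in Proposition~\ref{lem:listColoring}, and to simulate the lists by ordinary colouring constraints attached \emph{locally} to each vertex, so that the bandwidth stays bounded. The only tool needed is a gadget that permanently forbids a prescribed set of colours at a vertex while being itself unaffected by reconfiguration.

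Such a gadget is a \emph{frozen palette}: a clique $K_k$. In any proper $k$-colouring of $K_k$ the $k$ vertices carry all $k$ colours, and no vertex can be recoloured, since its $k-1$ neighbours already occupy the remaining $k-1$ colours; hence the colouring of a $K_k$ is rigid and cannot change during \emph{any} reconfiguration sequence, even after extra edges are added to it, as extra edges can only remove candidate colours from a vertex. First I would take the instance produced by Proposition~\ref{lem:listColoring}, with list assignment $l$ and start and target colourings, and for each vertex $v$ add a private clique $K_k^v$ on fresh vertices, coloured $1,2,\dots,k$ in both the start and the target colouring. To encode the list $l(v)$ I would join $v$ to the vertex of $K_k^v$ that carries colour $c$, for every colour $c\notin l(v)$. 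Since $K_k^v$ is frozen, these edges forbid at $v$ exactly the colours outside $l(v)$ throughout the whole process, and impose no other constraint.

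It remains to verify two things. For the bandwidth, I would take a bucket arrangement of the chain of onions with buckets of bounded size and place each clique $K_k^v$ into the same bucket as $v$. As edges inside a single bucket are permitted, all clique edges and all edges joining $v$ to its palette lie within one bucket, while the original chain edges are untouched; the bucket sizes grow only by the factor $k+1$, so by the bucket-arrangement bound quoted above the bandwidth remains bounded by a constant depending only on the fixed graph $H$. For correctness, the frozen cliques guarantee that in every reachable colouring only the original vertices can change colour, and such a vertex $v$ may take colour $c$ exactly when $c\in l(v)$ and $c$ is unused on the neighbours of $v$ in the original chain --- precisely the list-colouring rule. Thus the reconfiguration sequences of the two instances are in one-to-one correspondence, and the reduction, computable in logarithmic space, preserves \cclass{PSPACE}-completeness. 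The step requiring the most care is the rigidity claim: one must observe that attaching the forbidding edges cannot create any new recolouring move for a clique vertex, so each palette really is immovable and the simulated lists never drift.
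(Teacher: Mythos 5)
Your proposal is correct and follows essentially the same route as the paper: attach a private frozen $K_k$ to each vertex, fix its colouring, and encode the list $l(v)$ by edges from $v$ to the clique vertices carrying the forbidden colours, noting that the clique colouring can never change. The paper states the bandwidth bound slightly more tersely as $b'=b+k$, but your bucket-arrangement argument yields the same conclusion.
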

\begin{proof} Let $k,b$ be integers from Proposition~\ref{lem:listColoring} and let $b'=b+k$. An instance of \problem{$k$-List-Coloring Reachability} of bandwidth $b$ can be transformed into an equivalent instance of \problem{$k$-Coloring Reachability} of bandwidth at most $b'$ simply by adding a clique of size $k$ to the graph for each original vertex, assigning to their vertices all the colors in some order and replacing the list of each original vertex by edges to vertices of its clique that have colors outside of the list. The colors of any $k$-clique clearly can never be changed in the new instance, and an extension of the cliques' coloring to a coloring of the new instance is proper if and only if it is a proper list-coloring of the original instance.
\qed\end{proof}

\section{Homomorphism reconfiguration on paths, trees and cycles}



In this section we describe a different view making the intermediary problem interesting in its own right. Given two digraphs $G,H$, an $H$-homomorphism of $G$ is a function from vertices of $G$ to vertices of $H$ such that arcs of $G$ are mapped to arcs of $H$ in the same direction. Such an assignment of vertices of $H$ (colors) to vertices of $G$ is also called an $H$-coloring, since it generalizes proper $k$-colorings: a $K_k$-homomorphism of a graph (where undirected edges are equivalent to arcs in both directions) is the same as a proper $k$-coloring. It is a major open problem whether for each digraph $H$, the problem of deciding the existence of an $H$-homomorphisms of a digraph is \cclass{P} or \cclass{NP}-complete. Such a dichotomy is known for undirected graphs \cite{hell1990undir_homomorphism_dichotomy}. These problems are a special case of Constraint Satisfaction Problems, but it is known that a dichotomy for digraph homomorphisms would imply a dichotomy for general CSPs \cite{feder1998homo_digraph}. See \cite{hell2004homomorphism_book} for an overview of these and related results.

Extending the work of Gopalan et al. on the reconfiguration of generalized SAT problems to Constraint Satisfaction Problems, we introduce the following problem. In \problem{$H$-Coloring Reachability} one asks whether in a given graph two $H$-colorings can be transformed into one another by changing one color (i.e., the mapping of one vertex) at a time, maintaining an $H$-coloring throughout. Notice that an $H$-coloring of a directed path is the same as an $H$-word (colors correspond to symbols). Therefore Theorem~\ref{thm:hWordReachability} can be reformulated as follows.
\begin{corollary}\label{cor:digraphRecoloring}
There is a digraph $H$ for which \problem{$H$-Coloring Reachability} is \cclass{PSPACE}-complete even on directed paths.
\end{corollary}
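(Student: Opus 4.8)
The plan is to observe that \problem{$H$-Coloring Reachability} restricted to directed paths is literally the same problem as \problem{$H$-Word Reachability}, so the corollary is an immediate reformulation of Theorem~\ref{thm:hWordReachability}. I would take the digraph $H$ furnished by that theorem, for which \problem{$H$-Word Reachability} is \cclass{PSPACE}-complete, and argue that it already witnesses the corollary.

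First I would pin down the correspondence precisely. A directed path on $n$ vertices is the digraph $P$ with vertices $p_1,\dots,p_n$ and arcs $(p_i,p_{i+1})$ for $i\in\{1,\dots,n-1\}$. An $H$-coloring of $P$ assigns to each $p_i$ a vertex $c_i$ of $H$ subject only to the requirement that each arc $(p_i,p_{i+1})$ map to an arc of $H$, i.e.\ that $(c_i,c_{i+1})$ be an arc of $H$ for every $i$. By definition this says exactly that the word $c_1c_2\cdots c_n$ is a walk in $H$, that is, an $H$-word; and conversely every $H$-word of length $n$ arises from a unique $H$-coloring of $P$. So $c\mapsto c_1\cdots c_n$ is a bijection between $H$-colorings of $P$ and $H$-words of length $n$.

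Next I would check that this bijection carries reconfiguration moves to reconfiguration moves. A single step in \problem{$H$-Coloring Reachability} changes the color of exactly one vertex $p_i$ while keeping the result an $H$-coloring; under the bijection this is precisely a change of the $i$-th symbol of the word while keeping it an $H$-word, which is exactly one step of \problem{$H$-Word Reachability}. Hence an instance $(s,t)$ with $|s|=|t|=n$ maps, by taking $P$ to be the directed path on $n$ vertices and $s,t$ to their corresponding $H$-colorings, to an equivalent instance of \problem{$H$-Coloring Reachability} on a directed path, and this map is trivially computable in logarithmic space. This gives \cclass{PSPACE}-hardness; membership in \cclass{PSPACE} is inherited because the two problems coincide on directed paths (or directly: since every move preserves length, only finitely many words of length $n$ occur, so one searches them nondeterministically and applies Savitch's theorem as in Theorem~\ref{thm:thue}).

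I expect no genuine obstacle here, as the entire content lies in Theorem~\ref{thm:hWordReachability}. The only point requiring care is the bookkeeping of arc directions: one must use the \emph{directed} path, with all arcs oriented the same way, so that the homomorphism condition reads off consecutive pairs as \emph{ordered} pairs, matching the ordered relation $E$ of $H$. An undirected path would instead force the relation to be read symmetrically, which would not reproduce \problem{$H$-Word Reachability}.
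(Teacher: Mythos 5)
Your proposal is correct and matches the paper exactly: the paper justifies the corollary with the one-line observation that an $H$-coloring of a directed path is the same as an $H$-word, so the statement is an immediate reformulation of Theorem~\ref{thm:hWordReachability}. You simply spell out the same bijection and the correspondence of single-symbol changes to single-vertex recolorings in more detail, including the correct caution about using directed rather than undirected paths.
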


For an undirected graph $H$, reconfiguring $H$-colorings a path turns out to be computationally easy. The idea is to reach a coloring $ababa\dots$, for some edge $ab$ of $H$ in quadratically many steps, and then only moving between such alternating colorings. This is easily generalized to trees, as in the following proposition.
\begin{proposition}\label{propo:Hrecoloring_on_trees}
Let $H$ be a graph (possibly with loops) and let $\alpha,\beta$ be two $H$-colorings of a tree $T$ with root $r$ (chosen arbitrarily) and at least 2 vertices. Then $\alpha$ can be reconfigured to $\beta$ if and only if there is in $H$ a walk of even length  between the colors assigned to $r$ in the two colorings.
\end{proposition}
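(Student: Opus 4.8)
The plan is to first recast the walk condition in structural terms and then treat the two implications separately, with essentially all of the work going into explicit reconfiguration sequences for the ``if'' direction. Since $T$ is connected, every $H$-coloring maps it into a single connected component $C$ of $H$. I would observe that the condition ``there is an even walk between $\alpha(r)$ and $\beta(r)$'' is equivalent to the following: $\alpha(r)$ and $\beta(r)$ lie in the same component $C$, and moreover, if $C$ is bipartite with parts $X,Y$, they lie in the same part. This is because an even walk preserves the part in a bipartite component, while in a non-bipartite component an odd closed walk lets one realize walks of both parities between any two vertices, so an even walk always exists there.

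For necessity I would use a per-vertex invariant. A single recoloring move changes the color of one vertex $v$ from $c$ to $c'$, and both $c$ and $c'$ must be adjacent in $H$ to the (unchanged) color of every neighbour of $v$ in $T$. As $v$ has at least one neighbour, $c$ and $c'$ lie in the same component of $H$, and when that component is bipartite they lie in the same part (both being opposite, in $C$, to a fixed neighbour's color). Applying this along the whole sequence to $v=r$ shows that $\alpha(r)$ and $\beta(r)$ share a component and, in the bipartite case, a part, which by the equivalence above gives an even walk.

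For sufficiency I would reduce everything to \emph{alternating} colorings: for an edge $ab$ of $H$, let $\chi_{a,b}$ send even-depth vertices to $a$ and odd-depth vertices to $b$. The core lemma, which I expect to be the main obstacle, states that $\alpha$ can be reconfigured to $\chi_{\alpha(r),b}$ for any chosen neighbour $b$ of $\alpha(r)$ without ever recoloring $r$; it is proved by induction on the height of $T$. The difficulty is that one cannot simply recolor top-down, since a vertex's still-unprocessed descendants constrain its moves. The induction circumvents this: writing $a=\alpha(r)$ and letting $c_1,\dots,c_k$ be the children of $r$, one first applies the induction hypothesis inside each subtree $T_i$ to reach $\chi_{\alpha(c_i),a}$ while keeping $c_i$ fixed (which keeps the edge $rc_i$ proper, as $\alpha(c_i)\in N(a)$), using the freedom to pick $a$ as the partner color. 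After this every odd-depth vertex has all its neighbours colored $a$, so one may recolor them to $b$ one at a time, yielding $\chi_{a,b}$.

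Finally I would connect alternating colorings to one another and assemble the argument. Recoloring all even-depth vertices turns $\chi_{a,b}$ into $\chi_{c,b}$ whenever $a,c\in N(b)$ (each single move is valid since the odd layer stays at $b$), and symmetrically for the odd layer; these are valid because no two vertices of the same depth parity are adjacent. Following the given even walk $\alpha(r)=w_0,w_1,\dots,w_{2m}=\beta(r)$ and alternately updating the two layers then carries $\chi_{\alpha(r),b}$ to $\chi_{\beta(r),w_{2m-1}}$. Combining the pieces: reconfigure $\alpha$ to some $\chi_{\alpha(r),b}$ by the core lemma, slide along the walk to $\chi_{\beta(r),w_{2m-1}}$, adjust the odd layer so it agrees with the alternating coloring $\chi_{\beta(r),b'}$ reached from $\beta$ (valid since $w_{2m-1},b'\in N(\beta(r))$), and reverse $\beta$'s reduction. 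This exhibits a reconfiguration from $\alpha$ to $\beta$, completing the ``if'' direction.
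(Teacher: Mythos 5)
Your proof is correct, and its overall architecture matches the paper's: both arguments normalize an arbitrary coloring to an ``alternating'' coloring $\chi_{a,b}$ (even levels $a$, odd levels $b$) without touching the root, then move between alternating colorings by recoloring whole parity classes along the given even walk, and both prove necessity via the same invariant (every vertex has a neighbour, so a single move preserves the component of its colour and, in a bipartite component, the side of the bipartition). The one place you genuinely diverge is the normalization lemma. The paper proves it by a bottom-up induction over levels $T_i$ of the tree: it repeatedly recolors the deepest unprocessed layers to their grandparents' colours, maintaining the invariant that every vertex below a certain depth already agrees with its grandparent, which collapses the colouring to two values after height-many rounds. You instead recurse on subtrees: each subtree rooted at a child $c_i$ is first normalized to alternate between $\alpha(c_i)$ and the root colour $a$ (using the freedom to choose the partner colour in the inductive hypothesis), after which the entire odd layer sees only $a$ and can be swept to any chosen $b\in N(a)$. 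The two inductions buy essentially the same thing; yours makes the ``root colour is never changed'' guarantee and the handling of children with distinct colours completely explicit, at the cost of an extra pass over the odd layer, while the paper's is a single global sweep. Both are valid; no gap.
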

\begin{proof}
Let $T_i$ be the vertices at distance exactly $i$ from $r$ in the tree $T$.
We first prove inductively that for $i=0,\dots,n-1$, any $H$-coloring $\gamma$ of $T$ can be reconfigured into a coloring $\gamma_i$ such that
\begin{itemize} 
\item $\gamma$ and $\gamma_i$ assign the same colors to each vertex in $T_0,T_1,\dots, T_{n-i}$.
\item for every $v\in T_j$ with $j\geq n-i+1$, $\gamma_i$ assigns the same color to $v$ and its grandparent.
\end{itemize}
For $i=0$ it suffices to take $\gamma_0=\gamma$. For $i+1$, it suffices to take the coloring $\gamma_i$ and recolor one by one each vertex $v$ in $T_{n-i},T_{n-i+2},T_{n-i+4}, \dots$ (in that order) to the color of $v$'s grandparent -- since all of $v$'s sons have the same color as $v$'s parent, which is a color adjacent in $H$ to the color of $v$'s grandparent, this remains a valid $H$-coloring.

Let $S=T_0\cup T_2\cup \dots$ and $S'=T_1\cup T_3\cup \dots$. In $\alpha_{n-1}$ all vertices in $S$ have the same color, say $a$, and also all vertices in $S'$ have the same color, say $a'$. Similarly for $\beta_{n-1}$ with colors $b,b'$. If there is an even-length walk $a_0,a_1,\dots,a_{2l}$ in $H$ with $a_0=a$ and $a_{2l}=b$, then all $S'$ can be recolored from $a'$ to $a_1$, $S$ from $a$ to $a_2$, $S'$ to $a_3$ and so on until $S'$ is recolored $a_{2l-1}$ and $S$ is recolored $a_{2l}$, after which $S'$ can be recolored to $b'$ to get $\beta_{n-1}$.

If, on the other hand, there is no even-length walk from $a$ to $b$, then they must belong to different components of $H$ or to different sides of a bipartition of a component of $H$. Since every vertex has some neighbor, the colors of vertices in $T$ must remain in the same component of $H$ and if the component is bipartite, the colors of $S$ must be and remain on one side, while the colors of $S'$ remain on the other side. In either case, the color $a$ cannot be changed to $b$.
\qed\end{proof}

Nevertheless, \problem{$H$-Coloring Reachability} for an undirected graph $H$ can still be hard on very simple graphs. We prove this for cycles, but the following proof can easily be adapted to give other graphs, like paths with a triangle attached to each end.

\begin{proposition}\label{propo:Hrecoloring_on_cycles}
There is a graph $H$ for which \problem{$H$-Coloring Reachability} is \cclass{PSPACE}-complete even on cycles.
\end{proposition}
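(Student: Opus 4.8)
The plan is to reduce from \problem{$H'$-Coloring Reachability} on directed paths (Corollary~\ref{cor:digraphRecoloring}), where $H'=(\Delta,E')$ is the hard digraph. The difficulty is that here $H$ must be undirected, and Proposition~\ref{propo:Hrecoloring_on_trees} shows that on a path (or tree) undirected recoloring is trivially easy: the orientation of the underlying walk can always be reversed by local moves. So no local gadget can possibly record the direction of an arc of $H'$. My idea is to enforce orientation \emph{globally}, using the one topological feature a cycle has that a tree does not: a nontrivial fundamental group, detected by a winding number that single-vertex recolorings cannot change.

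Concretely, I would fix an integer $L\ge 3$ and define $H$ on the vertex set $\Delta\times\mathbb{Z}/L$ (plus a few auxiliary vertices), putting an undirected edge between $(a,i)$ and $(b,i+1)$ exactly when $(a,b)\in E'$. The projection $(a,i)\mapsto i$ is then a homomorphism from $H$ onto the undirected $L$-cycle $C_L$, so every $H$-coloring of a cycle $C_N$ induces a closed walk in $C_L$ and hence carries a winding number $w\in\mathbb{Z}$ with $|w|\le N/L$. Since every edge of $H$ changes the phase $i$ by $+1$ when an arc of $E'$ is traversed \emph{forward} and by $-1$ when traversed \emph{backward}, and the phase must return to its start after $N$ steps, we get $w=(\#\text{forward}-\#\text{backward})/L$. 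Choosing $N$ divisible by $L$, the \emph{maximal} value $w=N/L$ is attained if and only if every step is forward, i.e.\ the coloring traces a closed \emph{directed} walk of $H'$.

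To make directed closed walks encode \problem{$H'$-Word Reachability}, I would augment $H'$ with a fixed, rigid return segment from $\tapeR$ back to $\tapeL$ built from fresh vertices, padding its length so that $N=n+r$ is divisible by $L$; then the length-$N$ directed closed walks are exactly the $H'$-words $\tapeL\dots\tapeR$ of length $n$ followed by the forced return. An input $(s,t)$ maps to the two maximal-winding colorings $\sigma,\tau$ of $C_N$ corresponding to $s,t$. The crux is that the winding number is invariant under a single admissible recoloring: when one vertex is recolored, either its two neighbors share a phase and the move merely swaps one backtracking detour in $C_L$ for another, or the neighbors have distinct phases, which forces the vertex's phase and lets only its $\Delta$-coordinate change; in both cases the induced loop in $C_L$ is unchanged up to homotopy. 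Hence every reconfiguration from $\sigma$ stays in the maximal-winding class, where the phases are rigid and the only moves change a single $\Delta$-coordinate subject to the arc constraints of its two neighbors -- that is, change exactly one symbol of the associated $H'$-word. This yields $\sigma\reach\tau$ iff $s\reach t$; membership in \cclass{PSPACE} is immediate and $H$ is fixed while $C_N$ is computed in logarithmic space, giving \cclass{PSPACE}-completeness on cycles.

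The step I expect to be the main obstacle is precisely this direction enforcement, for the structural reason above: reversibility of undirected edges means orientation cannot be a local property, so the entire argument rests on converting it into a global winding invariant and then verifying the two facts that make this work -- that a single admissible recoloring never alters the winding number, and that its maximal value pins down an honest forward simulation in which elementary moves correspond one-to-one with single-symbol rewrites of $H'$-words. Establishing rigidity of the return segment and the routine length padding needed to align $N$ with $L$ and with the maximal winding value are the remaining technical points; I note that the same mechanism, with the global cycle replaced by a short odd cycle anchoring each end, should give the variant with triangles attached to the ends of a path.
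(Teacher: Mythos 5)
Your construction is essentially the paper's: the paper takes $H'=H^*\times\mathbb{Z}/3$ with edges $\{(a,i),(b,i+1\bmod 3)\}$ for arcs $(a,b)$, observes that the second coordinate is a ``frozen'' $3$-coloring of the cycle (a vertex whose two neighbours carry phases $j-1$ and $j+1$ can only keep phase $j$), and notes that this freezes the orientation so that recolorings of the undirected cycle are exactly recolorings of the directed cycle; the closing of the path into a directed cycle via a rigid return segment is the same preliminary step in both. Your winding-number formulation is a global restatement of that local rigidity, and in fact your own argument bottoms out in the same local claim (``distinct neighbour phases force the vertex's phase''), so the homotopy machinery is not doing extra work. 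One concrete pitfall to flag: both that local claim and the winding invariance \emph{fail} for $L=4$, since a vertex with neighbour phases $j-1$ and $j+1$ may then jump from phase $j$ to phase $j+2\equiv j-2$, replacing the length-$2$ walk $j{-}1,j,j{+}1$ by $j{-}1,j{-}2,j{-}3$ and shifting the winding number by one; so ``fix an integer $L\ge 3$'' must be sharpened to $L=3$ or $L\ge 5$. With that choice the argument goes through and matches the paper's proof.
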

\begin{proof}
Let $H=(\Sigma,E)$ be the digraph from Corollary~\ref{cor:digraphRecoloring}. The construction can be easily adapted so that \problem{$H$-Coloring Reachability} is \cclass{PSPACE}-complete on directed cycles of length divisible by 3. Let $H'=(\Sigma', E')$ be a graph with $\Sigma' = \Sigma \times \{0,1,2\}$ and $E'=\{ \{(a,i),(b,i+1\mod 3)\}\ |\ (a,b) \in E \}$. For an $H$-coloring $\alpha$ of a directed cycle $v_1,\dots,v_{3n},v_1$, define $\alpha'(v_i) = (\alpha(v_i), i \mod 3)$ to be an $H'$-coloring of the underlying undirected cycle. The second element of each pair color $\alpha'(v_i)$ cannot ever be changed (the projection to the second elements gives a `frozen' 3-coloring). The relation constraining the first elements is hence exactly $E$, the direction being implied by the 3-coloring. Therefore if $\alpha,\beta$ are two $H$-colorings of the directed cycle, then one can be recolored into the other if and only if $\alpha'$ can be recolored into $\beta'$ as $H'$-colorings of the undirected cycle.
\qed\end{proof}

\section{Treedepth}
\def\td{\mbox{td}}
Treedepth is another parameter describing sparse graphs. Introduced by Ne\v{s}et\v{r}il and Ossona de Mendez \cite{nevsetvril2006tree}, it found various algorithmic applications. It is defined as the minimum height of a rooted forest closure $F$ such that the graph is a subgraph of $F$. A rooted forest closure of height $h$ is any graph obtained from a disjoint union of rooted trees of height $h$ by adding edges from every vertex to all of its ancestors.
There are only finitely many graphs of bounded treedepth with no non-trivial automorphisms. Intuitively, such graphs can be large only by having many copies of the same subgraph, all acting the same way. This allows to reduce many problems by merging the copies, which is made formal by the following theorem. 

\begin{theorem}[\cite{nevsetvril2006tree}]
\label{thm:treedepth}
	For all integers $N$ and $t$, there is an integer $\digamma(N,t)$ such that
	for any graph $G$ of treedepth at most $t$ and any mapping $g:V(G)\to \{1,\dots,N\}$,
	there is a subset $A$ of $V(G)$ of cardinality at most $\digamma(N,t)$ such that
	$G$ has a $g$-preserving homomorphism to $G[A]$.
\end{theorem}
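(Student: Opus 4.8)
The plan is to induct on the treedepth bound $t$. In the base case $t=1$ the graph $G$ is edgeless (a height-$1$ forest closure has no ancestor edges), so I may send every vertex to a fixed representative of its $g$-color; taking $A$ to consist of one vertex per color actually used gives $|A|\le N$, and any such map is a $g$-preserving homomorphism, so $\digamma(N,1)=N$ works. For the inductive step I would use the standard recursive structure of treedepth, which follows from the forest-closure definition: each connected component of $G$ has treedepth at most $t$, and a connected graph of treedepth at most $t$ has a vertex (a root $r$) whose deletion leaves a graph of treedepth at most $t-1$.

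The reduction then proceeds in two merging stages, both powered by the same pigeonhole observation: up to color-preserving isomorphism there are only finitely many graphs on a bounded number of vertices colored by a bounded number of colors. First I would reduce a single connected component $C$ with root $r$. The graph $C-r$ has treedepth at most $t-1$, and inside $C$ every vertex is adjacent only to vertices of its own component of $C-r$, together with possibly $r$. I augment the coloring $g$ on each component $D$ of $C-r$ by a single bit recording adjacency to $r$, obtaining a coloring with $2N$ colors, and apply the induction hypothesis to each $D$. This yields, for each $D$, a core of size at most $\digamma(2N,t-1)$ and a homomorphism into it preserving $g$ \emph{and} adjacency to $r$. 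Since the number of such colored cores is bounded (by some $M=M(N,t)$ depending only on $\digamma(2N,t-1)$ and $N$), I keep one representative core per color-isomorphism type, route every component of $C-r$ into the representative of its type, and fix $r$; the recorded adjacency bit guarantees every edge to $r$ survives. This gives a $g$-preserving homomorphism of $C$ onto an induced subgraph on at most $1+M\cdot\digamma(2N,t-1)$ vertices.

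Second, I apply the same merging at the top level across the components of $G$. Each component has now been reduced to a $g$-colored core of bounded size; grouping the components by the color-isomorphism type of their cores (again finitely many, say $M'$ types) and mapping each component onto the representative core of its type yields the desired $A$, with
\[
\digamma(N,t)\ \le\ M'\cdot\bigl(1+M\cdot\digamma(2N,t-1)\bigr).
\]
Because distinct components of $G$ share no edges and each maps homomorphically and $g$-preservingly into its representative core, the combined map is a $g$-preserving homomorphism $G\to G[A]$, and the recurrence yields a finite bound $\digamma(N,t)$.

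I expect the crux to be the per-component step, specifically the bookkeeping that makes the induction close: one must record adjacency to the peeled root in the coloring so that merging subtrees coming from different components of $C-r$ never destroys an edge to $r$, while checking that this augmentation only multiplies the palette ($N\mapsto 2N$) and hence keeps the number of colors a function of $N$ and $t$ alone. Everything else is the pigeonhole bound on colored graphs of bounded order together with the routine verification that composing the per-component homomorphisms with color-preserving isomorphisms onto the chosen representatives produces a global homomorphism into $G[A]$. (The same argument treats the digraph case by recording instead the type of arc between a vertex and the root -- in, out, both, or none -- which merely replaces the factor $2$ by $4$.)
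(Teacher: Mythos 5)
The paper does not prove this statement at all---it is imported verbatim from Ne\v{s}et\v{r}il and Ossona de Mendez and used as a black box---so there is no internal proof to compare against. Your reconstruction is correct and is essentially the standard argument from the cited source: induction on treedepth via the recursive characterization (peel a root from each connected component), augment the colouring to record adjacency to the peeled root so that merging colour-isomorphic reduced branches by pigeonhole preserves all edges, and close the recurrence $\digamma(N,t)\le M'\bigl(1+M\cdot\digamma(2N,t-1)\bigr)$ with $\digamma(N,1)=N$; the in/out/both/none refinement you mention is exactly what the paper alludes to when it says the proof ``can easily be extended to digraphs.''
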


If one defines the treedepth of a digraph to be the treedepth of the underlying graph, the proof of Theorem~\ref{thm:treedepth} can easily be extended to digraphs (giving homomorphisms that preserve edge directions). Seeing that the proof is constructive, such a homomorphism can be computed in time polynomial in $\digamma(N,t)\cdot n$. In the following, $H$ refers to any digraph, and may be given as part of the input.

\begin{proposition}
	\problem{$H$-Coloring Reachability} on loopless digraphs of treedepth at most $t$ can be solved in time $\Oh(f(|H|,t) n^{\Oh(1)})$ for some $f$.
\end{proposition}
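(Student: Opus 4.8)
The plan is to fold the (arbitrarily large) input digraph $G$ onto a bounded-size induced subgraph via Theorem~\ref{thm:treedepth}, and to argue that reconfiguration on $G$ is equivalent to reconfiguration on that subgraph, which then can be decided by brute force. Given the two input $H$-colorings $\alpha,\beta$ of $G$, I would first combine them into a single labeling: set $N=|V(H)|^2$ and $g(v)=(\alpha(v),\beta(v))$. Applying the digraph version of Theorem~\ref{thm:treedepth} (direction-preserving, constructive, computable in time polynomial in $\digamma(N,t)\cdot n$) yields a set $A\subseteq V(G)$ with $|A|\le\digamma(N,t)$ and a $g$-preserving homomorphism $\rho:G\to G[A]$. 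The two features I would exploit are that $|A|$ is bounded purely in terms of $|H|$ and $t$, and that $g$-preservation gives $\alpha\circ\rho=\alpha$ and $\beta\circ\rho=\beta$ (since $\rho$ preserves both coordinates of $g$).

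The core of the proof is the equivalence $\alpha\reach\beta$ in $G$ \emph{if and only if} $\alpha|_A\reach\beta|_A$ in $G[A]$. The forward direction is routine: restricting any reconfiguration sequence in $G$ to $A$ and deleting repetitions gives a valid sequence in $G[A]$, because a restriction of a homomorphism to an induced subgraph is a homomorphism, and a one-vertex change restricts to either a one-vertex change or no change.

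The backward direction is where I expect the \textbf{main obstacle}. Starting from a sequence $\gamma_0=\alpha|_A,\dots,\gamma_m=\beta|_A$ in $G[A]$, the natural lift by composition, $\gamma_j\circ\rho$, is an $H$-coloring of $G$ with endpoints $\gamma_0\circ\rho=\alpha$ and $\gamma_m\circ\rho=\beta$ by $g$-preservation; but a single-vertex change $\gamma_j\to\gamma_{j+1}$ at a vertex $w\in A$ lifts to recoloring the \emph{entire} fiber $\rho^{-1}(w)$, not a single vertex. I would realize this by recoloring the fiber one vertex at a time, and the key is to show every intermediate coloring is a valid $H$-coloring. Two facts make this work, and this is exactly where looplessness of $G$ is used. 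First, since $G$ is loopless and $\rho$ maps into the loopless $G[A]$, each fiber $\rho^{-1}(w)$ is an independent set, so no arc of $G$ joins two fiber vertices, and a half-recolored fiber creates no internal conflict. Second, any neighbour $u$ of a fiber vertex lies in a different fiber $w'\ne w$ whose colour $\gamma_j(w')=\gamma_{j+1}(w')$ stays fixed throughout the step; since both $\gamma_j$ and $\gamma_{j+1}$ are homomorphisms, the arc between $w$ and $w'$ forces \emph{both} the old colour $\gamma_j(w)$ and the new colour $\gamma_{j+1}(w)$ to be compatible with that fixed colour. Hence a fiber vertex may carry either the old or the new colour irrespective of its neighbours, and the fiber can be recoloured vertex by vertex along valid $H$-colorings.

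Finally, $G[A]$ has at most $\digamma(N,t)$ vertices and hence at most $|V(H)|^{\digamma(N,t)}$ distinct $H$-colorings, a quantity depending only on $|H|$ and $t$. I would decide $\alpha|_A\reach\beta|_A$ by explicitly building the reconfiguration graph of $G[A]$ (nodes are its $H$-colorings, arcs are single-vertex changes) and running a reachability search, at cost $f(|H|,t)$ for a suitable $f$. Combined with the polynomial cost of computing $A$ and $\rho$, this yields the stated $\Oh(f(|H|,t)\,n^{\Oh(1)})$ running time.
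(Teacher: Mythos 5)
Your proposal is correct and follows essentially the same route as the paper's proof: the same labeling $g(v)=(\alpha(v),\beta(v))$ with $N=|V(H)|^2$, the same $g$-preserving folding homomorphism onto a set $A$ of size at most $\digamma(N,t)$, the same restriction/lift equivalence, and the same fiber-by-fiber recoloring argument using looplessness to make each fiber an independent set. Your explicit justification that half-recolored fibers stay proper merely spells out what the paper leaves implicit.
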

\begin{proof}
Let $G$ be a loopless digraph of treedepth $t$ and let $\alpha,\beta$ be two $H$-colorings.
Let $N=|V(H)|^2$ and define $g(v) = (\alpha(v),\beta(v))$. By Theorem~\ref{thm:treedepth}
there is an $\alpha$- and $\beta$-preserving homomorphism $\mu$ from $G$ to some induced subdigraph $G[A]$ of cardinality at most $\digamma(N,t)$.
We claim that $\alpha$ can be reconfigured to $\beta$ in $G$ if and only if $\alpha|_A$  can be reconfigured to $\beta|_A$ in $G[A]$. The latter can be checked by brute force after finding the homomorphism, giving the claimed algorithm.

Let $\alpha_0,\alpha_1,\alpha_2,\dots,\alpha_l=\beta$ be a reconfiguration sequence from $\alpha_0=\alpha$ to $\alpha_l=\beta$ in $G$. Then the sequence $\alpha_i|_A$ is a reconfiguration sequence in $G[A]$.

Let $\alpha_0,\alpha_1,\alpha_2,\dots,\alpha_l=\beta$ be a reconfiguration sequence from $\alpha_0=\alpha$ to $\alpha_l=\beta$ in $G[A]$. 
Define mappings $\alpha'_i:V(G)\to H$ as  $\alpha'_i(v) := \alpha_i(\mu(v))$.
Suppose $\alpha'_i$ isn't a proper coloring -- then there is an edge $vw\in E(G)$ such that $\alpha'_i(v)\alpha'_i(w)\not\in E(H)$, that is, $\alpha_i(\mu(v))\alpha_i(\mu(w))\not\in E(H)$. But $\mu$ is a homomorphism, so $\mu(v)\mu(w)$ is an edge of $G[A]$, contradicting that $\alpha_i$ is a proper $H$-coloring.

Since $\mu$ is $\alpha$ and $\beta$-preserving, we have $\alpha'_0=\alpha$ and $\alpha'_l=\beta$. It suffices to show that $\alpha'_i$ can be reconfigured to $\alpha'_{i+1}$ for all $i$. Let $v$ be the only vertex of $A$ on which $\alpha_i$ and $\alpha_{i+1}$ differ. Then $\alpha'_i$ differs from $\alpha'_{i+1}$ only on $\mu^{-1}(v)$, where one coloring assigns the color $\alpha_i(v)$ and the other assigns $\alpha_{i+1}(v)$ to all vertices. Since all of these vertices map to $v$ through a homomorphism and there is no loop $vv$ in $E(G)$, $\mu^{-1}(v)$ is an independent set of $G$. Therefore, $\alpha'_i$ can be reconfigured to $\alpha'_{i+1}$ by changing the colors of vertices in $\mu^{-1}(v)$ from $\alpha_i(v)$ to $\alpha_{i+1}(v)$ one by one, in any order.
\qed\end{proof}
\pagebreak[3]

Note that a class of graphs has unbounded treedepth if and only if it has graphs with arbitrarily long (undirected) paths as subgraphs \cite{nevsetvril2008grad}. 
Therefore, by Corollary~\ref{cor:digraphRecoloring}, if a class $\mathcal{C}$ of graphs closed under taking subgraphs has unbounded treedepth, then for some digraph $H$, \problem{$H$-Coloring Reachability} is \cclass{PSPACE}-complete on orientations of $\mathcal{C}$ (the class of digraphs obtained by orienting edges of a graph in $\mathcal{C}$). 
This suggests that to find any interesting regularities in the complexity of reachability problems, limiting the structure formed by constraints between variables is not enough, the constraints themselves have to be very simple.

\begin{theorem}
Let $H^*$ be the digraph from Corollary~\ref{cor:digraphRecoloring} and assume \cclass{P}$\neq$\cclass{PSPACE}. For any class $\mathcal{C}$ of graphs without loops closed under taking subgraphs, the following statements are equivalent: 
\begin{itemize}
\item $\mathcal{C}$ has bounded treedepth.
\item \problem{$H$-Coloring Reachability} on orientations of $\mathcal{C}$ is in \cclass{P} for any digraph~$H$.
\item \problem{$H^*$-Coloring Reachability} on orientations of $\mathcal{C}$ is in \cclass{P}.
\end{itemize}
\end{theorem}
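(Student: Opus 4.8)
The plan is to prove the three-way equivalence by a cycle of implications, using the results already established in the paper. The three statements are: (1) $\mathcal{C}$ has bounded treedepth; (2) \problem{$H$-Coloring Reachability} on orientations of $\mathcal{C}$ is in \cclass{P} for all digraphs $H$; (3) the same holds for the single fixed digraph $H^*$. The implications I would prove are $(1)\Rightarrow(2)$, then trivially $(2)\Rightarrow(3)$, and finally $(3)\Rightarrow(1)$ by contraposition.

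**First**, for $(1)\Rightarrow(2)$, suppose $\mathcal{C}$ has bounded treedepth, say at most $t$. Any orientation of a graph in $\mathcal{C}$ has the same underlying graph and hence treedepth at most $t$ as a digraph, by our convention that the treedepth of a digraph is that of its underlying graph. Moreover, since $\mathcal{C}$ is loopless, so are its orientations. The Proposition on treedepth then gives an algorithm solving \problem{$H$-Coloring Reachability} on loopless digraphs of treedepth at most $t$ in time $\Oh(f(|H|,t)\,n^{\Oh(1)})$. For fixed $t$ this is polynomial in $n$ (with $H$ possibly part of the input, the dependence on $|H|$ is absorbed into the running time), so statement (2) follows.

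**The implication** $(2)\Rightarrow(3)$ is immediate: statement (2) asserts polynomiality for every digraph $H$, so in particular for $H=H^*$, which is statement (3). No work is needed here beyond noting this.

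**The main content** is $(3)\Rightarrow(1)$, which I would prove by contraposition: assuming $\mathcal{C}$ has unbounded treedepth, I show that \problem{$H^*$-Coloring Reachability} on orientations of $\mathcal{C}$ is \cclass{PSPACE}-complete, hence not in \cclass{P} under the assumption \cclass{P}$\neq$\cclass{PSPACE}. The key fact, cited in the paragraph preceding the theorem, is that a class closed under subgraphs has unbounded treedepth if and only if it contains arbitrarily long undirected paths as subgraphs \cite{nevsetvril2008grad}. Since $\mathcal{C}$ is closed under subgraphs, it therefore contains paths $P_n$ of every length. By Corollary~\ref{cor:digraphRecoloring}, \problem{$H^*$-Coloring Reachability} is \cclass{PSPACE}-complete on directed paths; every directed path is an orientation of some $P_n\in\mathcal{C}$, so these hard instances are all orientations of graphs in $\mathcal{C}$. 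Hence \problem{$H^*$-Coloring Reachability} on orientations of $\mathcal{C}$ is \cclass{PSPACE}-hard, and (assuming \cclass{P}$\neq$\cclass{PSPACE}) not in \cclass{P}, contradicting (3). The only point requiring care is the interface between "orientations of $\mathcal{C}$" and the directed-path instances from Corollary~\ref{cor:digraphRecoloring}: I must verify that the specific directed paths witnessing hardness are genuinely orientations of members of $\mathcal{C}$, which follows because $\mathcal{C}$ contains every undirected path as a subgraph and any orientation of such a subgraph lies in the class of orientations of $\mathcal{C}$. This closes the cycle and establishes the equivalence.
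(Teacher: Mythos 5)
Your proposal is correct and matches the paper's (only sketched) argument exactly: $(1)\Rightarrow(2)$ via the treedepth proposition, $(2)\Rightarrow(3)$ trivially, and $(3)\Rightarrow(1)$ by contraposition using the characterization of unbounded treedepth via arbitrarily long paths together with Corollary~\ref{cor:digraphRecoloring}. No issues.
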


It is well known that much weaker conditions are needed for the underlying problems: \problem{$H$-Coloring} is solvable in polynomial time (for every fixed $H$) for any class of bounded treewidth. See \cite{grohe2007complexity} for a tight characterization.


\section{Final remarks}
Not all natural reachability problems are hard in graphs of bounded treewidth. Notice that \problem{Clique Reachability} (in any model where adjacency of configurations can be tested in polynomial time) becomes trivial, since a clique must be contained in one bag of a tree decomposition. There are at most $2^{t+1} n$ different cliques in a graph of treewidth $t$ and they can be all enumerated in time $\Oh(2^{t+1} n)$. 

Our results should also be contrasted with positive results about the reconfiguration of colorings in graphs of bounded treewidth. In particular Dyer et~al.~\cite{dyer2006randomly} proved that for any graph of treewidth $t$ (or even any graph of degeneracy $t$) and any $k\geq t+2$, all $k$-colorings can be reached from one another (see also \cite{cereceda2007mixing,bonamy2013recoloring}). In Proposition~\ref{propo:kcoloring} the bandwidth, though constant, is strictly larger than the number of colors.

The specific value of bandwidth for which our reductions show hardness depend polynomially on the size (alphabet size times the number of states) of the PSPACE-complete Turing machine we start from. There are surprisingly small undecidable Turing machines and (non-balanced) string rewriting systems (e.g.~\cite{tseitin1958small_thue}), but we have been unable to find any similar work about small PSPACE-hard balanced systems or linearly bounded Turing machines. Let us only mention that instances with exponentially long solutions for \problem{Shortest Path Reachability} in graphs of bandwidth 13 are explicitly given in \cite{kaminski2011shortest}, by a~simple construction that may be seen as an illustrative case of some of the ideas presented here.

This article only analyzed the complexity of the reachability problem between two solutions. While the techniques can be easily adapted to answer some related questions (like the complexity of reaching any solution containing a given element or the diameter of solution graphs) others (like the complexity of deciding the connectivity of the solution graph) seem for now elusive.

\bibliographystyle{splncs}
\bibliography{pathwidth}

\end{document}